\DeclareMathOperator{\lvl}{lvl}
\DeclareMathOperator{\ar}{ar}
\newcommand{\conn}{\ensuremath{\simeq}}
\newcommand{\NAE}{\ensuremath{\mathbf{NAE}}}
\newcommand{\Pol}{\ensuremath{\mathrm{Pol}}}
\newcommand{\PCSP}{\ensuremath{\mathrm{PCSP}}}
\newcommand{\CSP}{\ensuremath{\mathrm{CSP}}}
\newcommand{\inn}[2]{\ensuremath{\mathbf{#1}\textbf{-}\mathbf{in}\textbf{-}\mathbf{#2}}}
\renewcommand{\A}{\mathbf{A}}
\newcommand{\B}{\mathbf{B}}
\newcommand{\X}{\mathbf{X}}
\renewcommand{\C}{\mathbf{C}}
\renewcommand{\D}{\mathbf{D}}
\renewcommand{\G}{\mathbf{G}}
\renewcommand{\H}{\mathbf{H}}
\renewcommand{\K}{\mathbf{K}}
\renewcommand{\L}{\mathbf{L}}
\renewcommand{\P}{\mathbf{P}}
\newcommand{\Q}{\mathbf{Q}}
\newcommand{\T}{\mathbf{T}}
\newcommand{\bx}{\mathbf{x}}
\newcommand{\Z}{{\mathbf{Z}}}
\newcommand{\HG}{\widehat{\mathbf{G}}}
\newcommand{\HD}{\widehat{\mathbf{D}}}
\newcommand{\HH}{\widehat{\mathbf{H}}}
\newcommand{\HT}{\widehat{\mathbf{T}}}
\newcommand{\minion}[1]{{\ensuremath{\mathscr{#1}}}}
\newcommand{\zaff}{\mathbf{Z}}
\renewcommand\phi\varphi
\let\rel\mathbf
\setlist[enumerate]{label={(\roman*)}, ref={\roman*}}
\theoremstyle{plain}
\newtheorem{theorem}{Theorem}
\newtheorem{lemma}[theorem]{Lemma}
\newtheorem*{lemma*}{Lemma}
\newtheorem{proposition}[theorem]{Proposition}
\newtheorem*{proposition*}{Proposition}
\newtheorem{corollary}[theorem]{Corollary}
\newtheorem*{corollary*}{Corollary}
\newtheorem{problem}[theorem]{Problem}
\theoremstyle{definition}
\newtheorem{definition}[theorem]{Definition}
\newtheorem{remark}[theorem]{Remark}
\begin{document}

\title{\texorpdfstring{\textsc{1-in-3} vs.~\textsc{Not-All-Equal}:\\ Dichotomy of a broken promise}{1-in-3 vs. Not-All-Equal: Dichotomy of a broken promise}\thanks{An extended abstract of part of this work has appeared in the Proceedings of LICS 2024~\cite{ckknz24:lics}. This work was supported by EPSRC Fellowship EP/X033201/1, UKRI EP/X024431/1, and a Clarendon Fund Scholarship. Partially funded by the National Science Centre, Poland under the  Weave-UNISONO  call  in  the  Weave  programme 2021/03/Y/ST6/00171. For the purpose of Open Access, the authors have applied a Creative Commons Attribution (CC BY) license to any Accepted Manuscript version arising. All data is provided in full in the results section of this paper.}}

\author{Lorenzo Ciardo\\
University of Oxford\\
\texttt{lorenzo.ciardo@cs.ox.ac.uk}
\and
Marcin Kozik\\
Jagiellonian University\\
\texttt{marcin.kozik@uj.edu.pl}
\and
Andrei Krokhin\\
Durham University\\
\texttt{andrei.krokhin@durham.ac.uk}
\and
Tamio-Vesa Nakajima\\
University of Oxford\\
\texttt{tamio-vesa.nakajima@cs.ox.ac.uk}
\and
Stanislav \v{Z}ivn\'y\\
University of Oxford\\
\texttt{standa.zivny@cs.ox.ac.uk}
}

\maketitle

\begin{abstract}
The \textsc{1-in-3} and \textsc{Not-All-Equal} satisfiability problems for Boolean CNF formulas are two well-known \NP-hard problems. In contrast, the \emph{promise}  \textsc{1-in-3} vs.~\textsc{Not-All-Equal} problem can be solved in polynomial time. 
In the present work, we investigate this constraint satisfaction problem in a regime where the promise is weakened from either side by a \emph{rainbow-free} structure, and establish a complexity dichotomy for the resulting class of computational problems.
\end{abstract}

\section{Introduction}
\label{sec:intro}

Let $\phi$ be a Boolean formula given as a conjunction of clauses, each consisting of three (un-negated) variables. Consider the following question:
\begin{center}
    \emph{Is there a truth assignment such that\\
    each clause has exactly one true variable?}
\end{center} 
This is a well-known \NP-hard computational problem, called (monotone) \textsc{1-in-3 Sat}~\cite{Garey79}. Similarly, the question 
\begin{center}
    \emph{Is there a truth assignment such that\\
    each clause has at least one true and one false variable?}
\end{center}
is the \NP-hard problem known in the literature as \textsc{3-Not-All-Equal (NAE) Sat}~\cite{Schaefer78:stoc,Garey79}.
In these and other variants,
the Boolean satisfiability problem has had a central importance in the development of complexity theory, its investigation dating back at least to~\cite{Cook71}. 
Notice that the first notion of satisfiability is stronger than the second: Any \textsc{1-in-3} assignment is also an \textsc{NAE} assignment. Consider now the \emph{promise} satisfiability problem that asks to distinguish whether a formula $\phi$ is satisfiable in the strong sense (a \textsc{1-in-3} assignment exists) or $\phi$ is not even satisfiable in the weak sense (an \textsc{NAE} assignment does not exist).
This problem --- known as ``\textsc{1-in-3} vs.~\textsc{NAE}''~\cite{BG21:sicomp} --- is a relaxation of both problems considered above, in that it admits any answer on those formulas that are satisfiable in the weak but not in the strong sense. Equivalently, one is promised that the input formula is not of that kind.
Let us try to solve this problem, using an algorithm from~\cite{BG21:sicomp}. For any clause in $\phi$ involving three variables $x,y,z$, consider the linear equation 
\begin{equation}
\label{eqn_xyz}
    x+y+z=1.
\end{equation}
This results in a linear system, which may be solved over the integers in polynomial time by using, essentially, Gaussian elimination.\footnote{More precisely, by the integral version of Gaussian elimination that corresponds, in matrix terms, to computing the Hermite
or Smith normal forms of the matrix of the linear system~\cite{KB79}.} If there is no integer solution, we are sure that, in particular, no $\{0,1\}$ solution exists: $\phi$ does not admit a $\textsc{1-in-3}$ assignment. If there is an integer solution, we round it by turning positive values into $1$ and non-positive values into $0$.
Since no three positive (respectively, non-positive) integers can sum up to $1$, we are guaranteed that the output of this process is a valid \textsc{NAE} assignment --- while it is not necessarily a \textsc{1-in-3} assignment, as is witnessed, for example, by the solution to~\eqref{eqn_xyz} given by $x=y=2$, $z=-3$.

In other words, while \textsc{1-in-3 Sat} is \NP-hard, if we are promised that all of the unsatisfiable formulas we are considering are not even satisfiable in the weaker \textsc{NAE} sense, the problem becomes tractable (solvable in polynomial time). Similarly (and dually),
the promise that all \textsc{NAE}-satisfiable formulas are also \textsc{1-in-3}-satisfiable turns \textsc{NAE} into a tractable problem.
It is then natural to investigate what happens if we modify the promise. Clearly, a stronger promise would lead to an even easier problem and, in particular, to a tractable one. What if we weaken it? How does the promise impact on the complexity behaviour of the problem? Where is the boundary of tractability?

In order to formulate these questions in a formal way, it shall be convenient to use the paradigm of \emph{Constraint Satisfaction Problems} (CSPs), which provides a broader context for capturing Boolean satisfiability problems, as well as other computational problems such as graph and hypergraph colouring. We can phrase a CSP as a homomorphism problem, where the objective is to test for the existence of a homomorphism between an \emph{instance} structure $\X$ and a \emph{template} structure $\A$. In the setting of satisfiability of Boolean formulas, we should think of $\X$ as a proxy for the formula $\phi$, while $\A$ encodes the satisfiability notion we are considering. In this formulation, $\X$ and $\A$ are two similar (finite) \emph{relational structures}, consisting of finite \emph{domains} ($X$ and $A$, respectively), as well as \emph{relations} ($R^\X\subseteq X^r$ and $R^\A\subseteq A^r$, respectively) for each \emph{relation symbol} $R$, where the positive integer $r$ is the \emph{arity} of $R$. A homomorphism between $\X$ and $\A$ is a map $f:X\to A$ that preserves the relations; i.e., $f(\bx)\in R^\A$ whenever $\bx\in R^\X$, where $f$ is applied entrywise. We denote the existence of a homomorphism by $\X\to\A$. The CSP parameterised by $\A$, denoted by $\CSP(\A)$, is the computational problem: ``Given an instance $\X$, output \textsc{Yes} if $\X\to\A$ and \textsc{No} if $\X\not\to\A$''. 
If we define the Boolean structure $\inn{1}{3}$ whose unique relation, of arity $3$, is the set 
\[
    \{(0,0,1),(0,1,0),(1,0,0)\},
\]
then
$\CSP(\inn{1}{3})$ is precisely the \textsc{1-in-3 Sat} problem. Similarly, we can formulate the \textsc{NAE Sat} problem as the CSP parameterised by the Boolean structure $\NAE$ whose unique relation, of arity $3$, is the set
\[
    \left\{
    \begin{array}{l}
    (0,0,1),(0,1,0),(1,0,0),\\(1,1,0),(1,0,1),(0,1,1)
    \end{array}
    \right\}.
\]
Other classic examples of CSPs are homomorphisms problems for digraphs (which are relational structures having a single, binary relation) and, more generally, hypergraphs. In particular, the CSP parameterised by the $n$-clique $\K_n$ is the well-known graph $n$-colouring problem.

Several decades of research efforts
have equipped the framework of CSPs with a rather sharp set of tools --- mostly coming from universal algebra --- that can be leveraged to explain the computational complexity of satisfiability problems. More precisely, the complexity of $\CSP(\A)$ is entirely determined by a certain type of identities holding in the \emph{polymorphism clone} of $\A$, which contains all homomorphisms of the form $\A^k\to\A$ (where $\A^k$ is the $k$-fold direct power of $\A$)~\cite{Jeavons97:closure,Jeavons98:algebraic,Bulatov05:classifying,LaroseT09,BOP18,BKW17}.

This correspondence between polymorphisms and complexity 
--- ultimately based on a  \emph{Galois connection} between sets of relations and operations --- 
has been invaluable in the exploration of the complexity landscape of CSPs. Eventually, it has led to the positive resolution of Feder-Vardi's Dichotomy Conjecture (now Theorem) by Zhuk~\cite{Zhu20} and Bulatov~\cite{Bul17}, which asserts that a CSP is tractable in polynomial time if it has a polymorphism satisfying an identity of a certain kind, and it is \NP-hard otherwise~\cite{FV98}. 

Promise problems like \textsc{1-in-3} vs.~\textsc{NAE} are captured by a paradigm, known as \emph{Promise $\CSP$s} (PCSPs for short), that generalises CSPs. Here, the template is a \emph{pair} $(\A,\B)$ of (finite) structures, and the computational problem $\PCSP(\A,\B)$ is ``Given an instance $\X$, output \textsc{Yes} if $\X\to\A$ and \textsc{No} if $\X\not\to\B$''. In order for the \textsc{Yes} and the \textsc{No} instances to be disjoint, we require that $\A\to\B$. PCSPs were introduced in~\cite{AGH17,BG21:sicomp} to unify the study of
approximability of perfectly satisfiable CSPs.
The PCSP framework vastly extends CSPs: Firstly,
several well-known computational problems can be formulated in the former, but not in the latter. Primary examples include the approximate colouring problems, which we shall discuss in more detail later. Secondly, already the early exploration of PCSPs unveiled a number of new phenomena --- absent in the non-promise setting --- that quickly called for a more general and conceptually different approach to their study, going beyond the universal-algebraic approach to CSPs. The crux of this need lies in the fact that the Galois connection for PCSPs is far less structured than the one for CSPs: While the complexity of PCSPs is still governed by polymorphisms (which are now homomorphisms $\A^k\to\B$), the algebraic structure that they form does not admit composition in the promise context. A consequence of this fact is that the universal-algebraic tools that allow generating an infinite set of new identities from a single polymorphic identity fail for PCSPs. This, in turn, stimulated the use of different tools to study PCSP polymorphisms, including Boolean function analysis~\cite{BGS23}, topology~\cite{KOWZ22}, matrix and tensor theory~\cite{cz23sicomp:clap,cz23soda:aip,cz23stoc:ba}, and Fourier analysis~\cite{HechtMS23}.

It is, of course, a very natural question whether the CSP dichotomy extends to PCSPs. 
Before being able to even conjecture a dichotomy for such a wide class of problems, it would be beneficial to obtain classifications in well-chosen special cases. 
When the Feder-Vardi conjecture was made, it was supported, in particular, by important special cases that were classified at the time:
the Boolean CSP (i.e.,~the domain of $\A$ is $\{0,1\}$) \cite{Schaefer78:stoc}, the afore-mentioned graph colouring problem~\cite{Karp72}, and the undirected graph homomorphism problem (i.e.,~$\A$ is an undirected graph)~\cite{HN90}. (Note that the second problem is a special case of the third.)
The promise versions of the Boolean CSP, graph colouring, and graph homomorphism problem
have been studied and there are some partial complexity classification results
about them (see, e.g.~\cite{AGH17,BG21:sicomp,BGS23,Ficak19:icalp,BBKO21,KOWZ22}), but
full classifications even in these cases are believed to be difficult to obtain. In particular, the promise version of graph colouring is the well-known \emph{approximate graph colouring} problem: checking whether a given graph is $n$-colourable or not even $m$-colourable, for given $n\leq m$. The complexity of this problem, that corresponds to $\PCSP(\K_n,\K_m)$, is a long-standing open question in computer science~\cite{GJ76}. It was only resolved in certain special cases~\cite{Khanna00:combinatorica,BrakensiekG16,BBKO21,Huang13,KOWZ22} or under strong complexity-theoretic assumptions such as variants of the Unique Games Conjecture~\cite{DMR09,GS20,Braverman21:focs}. 
Other particular examples of PCSPs have been studied in~\cite{Barto21:stacs,Brandts22:phd,bz22:ic,nz23,NVWZ_CF25}. The PCSP templates considered there have either small domains or specific structure.

The \textsc{1-in-3} vs.~\textsc{NAE} problem --- i.e.~$\PCSP(\inn{1}{3},\NAE)$ --- is in many respects a prototypical example of PCSP, in that it witnesses several of the new behaviours separating the promise and the non-promise worlds. Next, we discuss three of these separating behaviours.

\begin{enumerate}
\item If three structures $\A,\B,\C$ are such that $\A\to\C\to\B$, then
  $\PCSP(\A,\B)$ reduces to $\CSP(\C)$ through the trivial reduction that does
    not change the instance.\footnote{In the PCSP literature, this situation is
    known as a \emph{sandwich}~\cite{BG21:sicomp}.} Hence, if $\CSP(\C)$ is tractable, the exact same algorithm solving it also solves $\PCSP(\A,\B)$. The tractability of many PCSPs can be certified through this pattern, by exhibiting a suitable structure $\C$. For \emph{finite} $\C$, this is provably not the case for $\PCSP(\inn{1}{3},\NAE)$: It was established in~\cite{BBKO21} that any finite structure $\C$ for which $\inn{1}{3}\to\C\to\NAE$ is such that $\CSP(\C)$ is \NP-hard.\footnote{\label{footnote_barto_only_holds_if_finite}We remark that the statement is false if we admit structures having an \emph{infinite} domain. In fact, the algorithm based on Gaussian elimination that was described at the beginning of the Introduction can be reformulated as a reduction of $\PCSP(\inn{1}{3},\NAE)$ to $\CSP(\zaff)$, where $\zaff=(\mathbb{Z};\{(x,y,z)\in\mathbb{Z}^3\mid x+y+z=1\})$ satisfies $\inn{1}{3}\to\zaff\to\NAE$.} This means that for $\PCSP(\inn{1}{3},\NAE)$ the source of its tractability lies outside of the scope of non-promise finite CSPs and comes from \emph{infinite-domain} CSPs. (We note that there is no dichotomy for infinite-domain CSPs~\cite{Bodirsky08:icalp}, although there is a conjecture that the dichotomy for finite-domain CSPs extends to a certain well-behaved class of infinite-domain CSPs, cf.~\cite{BPP21}.)

\item \emph{Local consistency} is an algorithmic technique that consists in relaxing the question ``Does a homomorphism $\X\to\A$ exist?'' by
testing whether all subinstances of $\X$ of bounded size admit a system of compatible homomorphisms to $\A$.
This method applies to both CSPs and PCSPs; the templates that can be solved by enforcing local consistency are said to have a \emph{bounded width}. In the realm of CSPs, local consistency has precisely the same power as the linear-programming based Sherali-Adams hierarchy~\cite{tz17:sicomp}. However, it was recently shown in~\cite{AD22} that $\PCSP(\inn{1}{3},\NAE)$ is solved by some fixed round of the Sherali-Adams hierarchy and yet it has unbounded width, thus implying a lack of collapse of the two algorithmic models for PCSPs.

\item %Yet 
Another singular behaviour of $\PCSP(\inn{1}{3},\NAE)$ emerged in the context of robust algorithms. The class of CSPs that can be solved \emph{robustly} --- i.e., through some algorithm that is not highly sensitive to small noise in the instance --- coincides with the class of bounded-width problems~\cite{Barto16:sicomp}. Nevertheless, it was established in~\cite{bgs_robust23stoc} that the problem $\PCSP(\inn{1}{3},\NAE)$ can be solved robustly although it is of unbounded width --- thus yielding another discrepancy between the promise and the non-promise settings.
\end{enumerate}

For these reasons, a deeper understanding of \emph{why} problems akin to $\PCSP(\inn{1}{3},\NAE)$ are tractable can shed light on the new type of behaviours we expect to find in the complexity landscape of promise problems. 
To that end, in this paper we investigate the tractability boundary of problems that weaken the promise of $\PCSP(\inn{1}{3},\NAE)$ either from the left or from the right --- when the weakened version of the promise continues to share key features with $\inn{1}{3}$ and $\NAE$ (namely being \emph{symmetric} and \emph{rainbow-free}), we find a dichotomy for the corresponding fragments of PCSPs. We note that~\cite{bz22:ic}, motivated by the same goal, considered a certain class of Boolean templates that result from weakening the promise of one of the two structures. Unlike~\cite{bz22:ic}, we go beyond Boolean domains.

The relational structures $\inn{1}{3}$ and $\NAE$ contain one ternary relation, which is \emph{symmetric}, i.e., it is invariant under permutations of the arguments, and it is \emph{rainbow-free}, i.e., it does not contain any tuple $(x,y,z)$ whose arguments are all distinct (note that this is always the case for Boolean ternary structures).
Following~\cite{Barto21:stacs}, we describe this type of relational structures by associating digraphs to them. More precisely, given a digraph $\G=(V,E)$, we let $\HG$ be the relational structure
defined by
\[
\HG = (V; \{ (x, x, y), (x, y, x), (y, x, x) \mid (x, y) \in E \}).
\]
For example, if $\rel G$ is a single directed edge (from 0 to 1) or a single undirected edge, the corresponding structure $\HG$ is $\inn{1}{3}$ or $\NAE$, respectively. It is easy to check that we have $\G\to\H$ if and only if $\HG\to\HH$. Moreover, $\PCSP(\G,\H)$
always reduces to $\PCSP(\HG,\HH)$, but the latter problem can be harder --- for
example, if $\G=\H$ is an undirected edge, $\PCSP(\G,\H)=\CSP(\G)$ is the
(tractable) 2-colouring problem, whereas
$\PCSP(\HG,\HH)=\CSP(\HG)=\CSP(\NAE)$ is \NP-hard.

We are now ready to state our main results, which concern the regime where the promise of the \textsc{1-in-3} vs.~\textsc{NAE} problem is ``broken'' from either side --- i.e., 
one of the two structures in the template $(\inn{1}{3},\NAE)$ is replaced by a different structure.\footnote{We remark that PCSPs (and CSPs) can also be formulated in their \emph{search} version, as opposed to the \emph{decision} versions discussed in this Introduction. The two versions are polynomial-time equivalent for CSPs~\cite{Bulatov05:classifying}; for PCSPs, decision reduces to search, but it is not known whether an efficient reduction in the other direction always exists. Our results hold for both decision and search versions of PCSPs.} 
First, fix $\inn{1}{3}$ and consider any digraph $\G$ such that $(\inn{1}{3},\HG)$ is a valid template --- which happens if and only if $\inn{1}{3}\to\HG$ or, equivalently, if and only if $\G$ contains at least one edge.
\begin{restatable}{theorem}{mainone}
\label{thm:main1}
     $\PCSP(\inn{1}{3}, \HG)$ is tractable if $\G$ has a directed cycle of length at most 3, and it is \NP-hard otherwise.
\end{restatable}

Second, fix $\NAE$ and consider any digraph $\G$ such that $(\HG,\NAE)$ is a valid template --- which happens if and only if $\HG\to\NAE$ or, equivalently, if and only if the graph obtained from 
$\G$ by forgetting directions is bipartite. 
Following \cite{HN04}, we say that a digraph is \emph{balanced} if each of its
oriented cycles has as many edges in one direction as in the other.
\begin{restatable}{theorem}{maintwo}
\label{thm:main2}
    $\PCSP(\HG, \NAE)$ is tractable if $\G$ is balanced, and it is \NP-hard otherwise.
\end{restatable}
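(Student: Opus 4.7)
The plan is to handle tractability via a direct reduction to Theorem~\ref{thm:main1}, and to prove NP-hardness via a polymorphism-minion argument split into two sub-cases.

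\textbf{Tractability.} The key observation is that $\NAE = \HD_2$: the six non-constant tuples of $\NAE$ are exactly those produced from the two directed edges of $\D_2$ by the $\widehat{\cdot}$ construction. Hence $\PCSP(\HG,\NAE) = \PCSP(\HG,\HD_2)$, and when $\G$ is bipartite and balanced with at least one edge we have $\G \to \D_2$, so Theorem~\ref{thm:main1} with $\H = \D_2$ yields tractability (since $\D_2 \to \D_2$). An explicit sandwich is also available: Proposition~\ref{prop_basic_balanced} gives $\HG \to \zaff$ from balancedness, while the map sending positive integers to $1$ and non-positive ones to $0$ gives $\zaff \to \NAE$, so tractability of $\CSP(\zaff)$ via Smith normal form transfers to both the decision and the search versions of $\PCSP(\HG,\NAE)$. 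The edgeless case is vacuous.

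\textbf{Hardness.} Assume $\G$ is bipartite but not balanced. I start from the characterisation: $f \colon V(\G)^n \to \{0,1\}$ is a polymorphism of $(\HG,\NAE)$ iff $f(\vec x) \neq f(\vec y)$ whenever $(x_i,y_i) \in E(\G)$ for all $i$, i.e.\ iff $f$ properly $2$-colours the categorical tensor power $\G^{\otimes n}$. Since $\G$ is bipartite, polymorphisms exist at every arity, so hardness must arise from a minor identity failing uniformly across $\Pol(\HG,\NAE)$ rather than from any single polymorphism failing to exist. I would then split into two sub-cases. If $\G$ contains a pair of antiparallel edges $(u,v),(v,u) \in E(\G)$, then the map $0 \mapsto u,\, 1 \mapsto v$ gives $\NAE \to \HG$, and chaining with $\HG \to \NAE$ yields $\NAE \to \HG \to \NAE$; consequently $\X \to \HG$ iff $\X \to \NAE$, so the decision version of $\PCSP(\HG,\NAE)$ already coincides with the NP-hard $\CSP(\NAE)$. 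Otherwise, $\G$ is oriented, and bipartiteness plus unbalancedness forces $\G$ to contain an oriented even cycle $C$ with unequal numbers of forward and backward edges.

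The main obstacle is converting the imbalance of $C$ into a minion-level obstruction in this second sub-case. My plan is to show that imbalance propagates to all tensor powers: the connected components of $\G^{\otimes n}$ that ``lift'' $C$ are themselves even cycles of length dividing a fixed parameter $m$ depending only on $\G$, so every polymorphism of $(\HG,\NAE)$ must alternate rigidly on them. This rigidity should suffice to construct a minion homomorphism from $\Pol(\HG,\NAE)$ to the projection minion on $\{0,1\}$, giving NP-hardness via the abstract criterion of \cite{BBKO21}. An alternative route is to realise the NP-hard $\PCSP(\NAE_k,\NAE_c)$ \cite{DMR09,Wro22} via a pp-construction inside $\HG$. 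The delicate point is ruling out polymorphisms that exploit the freedom to $2$-colour different components of $\G^{\otimes n}$ independently in order to satisfy nontrivial minor identities --- the very imbalance blocking $\HG \to \zaff$ must also block this freedom.
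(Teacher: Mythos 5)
Your tractability argument is correct and matches the paper: $\NAE=\HD_2$, so balancedness lets you apply Theorem~\ref{thm:main1} with $\H=\D_2$ (equivalently, sandwich $\zaff$ via Proposition~\ref{prop_basic_balanced}).

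The hardness part has a genuine gap. Your plan is to analyse $\Pol(\HG,\NAE)$ directly for arbitrary unbalanced bipartite $\G$ and build a minion homomorphism to the projection minion. Two concrete problems. First, the structural claim that the components of $\G^{\otimes n}$ ``lifting'' an unbalanced cycle $C$ are themselves even cycles of bounded length is unsubstantiated and false in general: for an \emph{oriented} (not directed) cycle $C$, vertices of $C^{\otimes n}$ can have in- and out-degree up to $2^n$, so components are not cycles, and the intended rigidity argument does not get off the ground. Second, even granting some rigidity on a fixed family of subgraphs, you would still need a uniform, minor-compatible rule assigning to every polymorphism a single coordinate, and nothing in the sketch explains why different polymorphisms cannot 2-colour different components of $\G^{\otimes n}$ incompatibly; you flag this as ``the delicate point'' but do not resolve it. A minion homomorphism to projections is also strictly stronger than what is needed, and there is no evidence it exists here.

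The paper takes a more modest and more concrete route: it first proves $\PCSP(\HD_{2k},\NAE)$ is \NP-hard for all $k\geq 1$ by a finite, hands-on classification of \emph{ternary} polymorphisms of $(\HD_{2k},\NAE)$ (showing each is a non-constant dictator) and invoking Theorem~\ref{thm:hardnessCondition2}; it then reduces $\PCSP(\HD_{2k},\NAE)$ to $\PCSP(\HG,\NAE)$ by a pp-definition built from a single oriented path $\P$ (Proposition~\ref{prop:easy} and Lemma~\ref{lem:phi}, relying on Theorem~\ref{thm:pathjoining}). This sidesteps any global understanding of $\Pol(\HG,\NAE)$ for general $\G$, which is exactly where your sketch stalls. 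Your antiparallel-edge sub-case is correct but unnecessary: an unbalanced $\G$ always contains an oriented cycle of nonzero (even) net length, and the paper's reduction covers that case uniformly.
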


\paragraph{Related work.} We now discuss prior work on the problem $\PCSP(\inn{1}{3}, \A)$.
All our tractable cases can be solved by the Affine Integer Programming (AIP) relaxation of~\cite{BG21:sicomp}. This provides more evidence for the conjecture, first formally stated in the work~\cite{NZ24:talg}, that
$\PCSP(\inn{1}{3}, \A)$ is tractable if and only if it is solved by AIP.

Given a digraph $\G$, let $\HG^+$ denote the ternary structure obtained from $\HG$ by adding all rainbow tuples (i.e., tuples $(x,y,z)$ with $x,y,z$ all distinct) to the relation. Let $\T_i$ denote the transitive tournament with $i$ vertices.~\cite{Barto21:stacs} classified the complexity of all problems $\PCSP(\inn{1}{3}, \A)$ where $\A$ has a 3-element domain, with the exception of $\PCSP(\inn{1}{3}, \HT_3^+)$.\footnote{In~\cite{Barto21:stacs}, $\HT_3$ and $\HT_3^+$ are denoted by
$\mathbf{T}_1$
and $\mathbf{T}_1^+$, respectively.} 
Note that $\inn{1}{3}$ is precisely $\HT_2^+$. In
\cite{Barto21:stacs}, it is conjectured that 
$\PCSP(\HT_k^+, \HT_\ell^+)$
is \NP-hard for $k \leq \ell$.\footnote{In~\cite{Barto21:stacs}, $\HT_k^+$ is denoted by $\mathbf{LO}_k$.}
Versions of this problem for higher arities were shown to be hard in~\cite{nz23,NVWZ_CF25}, and the techniques of~\cite[Theorem 26]{nz23} show that to prove this conjecture it is necessary and sufficient to show that
\[
\PCSP(\HT_2^+, \HT_k^+) = \PCSP(\inn{1}{3}, \HT_k^+)
\]
is \NP-hard for all $k \geq 2$. In~\cite{FNOTW24}, it was proved that $\PCSP(\HT_3^+, \HT_4^+)$ is \NP-hard using topological methods. We view our hardness results as a partial step towards the resolution of this conjecture, in the rainbow-free regime: It immediately 
follows from Theorem~\ref{thm:main1} that $\PCSP(\inn{1}{3}, \HT_k)$ is \NP-hard for $k \geq 2$; what the conjecture then requires is for hardness to hold even if rainbow tuples are allowed.

Nakajima and \v{Z}ivn\'y~\cite{NZ24:talg} classified all PCSPs which have the form $\PCSP(\inn{1}{3}, \A)$ where $\A$ is symmetric and 
\emph{functional}, that is, the relation of $\A$ does not contain tuples $(x, y, z), (x, y, z')$ with $z \neq z'$. Thus, these results fail to classify $\PCSP(\inn{1}{3}, \HG)$ whenever $\G$ has a vertex with out-degree at least 2. On the other hand, unlike this paper, some of the structures $\A$ for which~\cite{NZ24:talg} gave a classification do have rainbow tuples.

Comparing with the literature on $\PCSP(\inn{1}{3}, \A)$, we find it interesting that the dual extension $\PCSP(\A, \NAE)$ has been significantly less investigated. We are aware of only two papers in this line of work.
Firstly, \cite{bz22:ic} studied $\PCSP(\A,\NAE)$ for Boolean (possibly non-symmetric) $\A$ obtained from $\inn{1}{3}$ by adding extra tuples.
Secondly, 
\cite{Ficak19:icalp}
established
a classification of Boolean symmetric PCSPs  --- however, up to isomorphism, the only Boolean symmetric relational structures with one ternary relation that map to $\NAE$ are $(\{0, 1\}; \emptyset)$, $\inn{1}{3}$, and $\NAE$. Thus, for the case $\PCSP(\HG, \NAE)$ with $\HG$ Boolean, the results of~\cite{Ficak19:icalp} only yield trivial results: The problem is tractable when $\G$ is empty or has a loop, or when $\HG = \inn{1}{3}$, and is otherwise \NP-complete. Our results thus additionally cover all other digraphs $\G$ --- i.e., the non-Boolean setting.\\

The rest of the paper is dedicated to the proofs of Theorems~\ref{thm:main1} and~\ref{thm:main2}, which are given in Sections~\ref{sec_first_theorem} and~\ref{sec_second_theorem}, respectively, after introducing some preliminary notions in Section~\ref{sec:prelims}. In Section~\ref{sec_conclusion}, we shall outline some natural directions for future investigation.

\paragraph{Acknowledgements} We thank all anonymous reviewers of various versions
of this paper for their comments and suggestions for changes.

\section{Preliminaries}
\label{sec:prelims}
We denote by $[n]$ the set $\{1,2,\ldots,n\}$. Moreover, we denote by $[n, m)$ the set $\{n, \ldots, m - 1\}$. We say that a collection of sets $A_1, \ldots, A_n$ is disjoint if $A_i \cap A_j = \emptyset$ whenever $i \neq j$. We consider that $\mathbb{N} = \{0, 1, \ldots, \}$, i.e.~the natural numbers include 0.

\paragraph*{Relational structures and homomorphisms}
Except for digraphs, all relational structures that appear in this paper 
are pairs $\A=(A;R^\A)$, where $A$ is the domain and $R^\A\subseteq A^3$ is a
ternary relation. 
A relational structure $\A$ is called \emph{symmetric} if the relation
$R^\A$ is
symmetric, i.e., invariant under any permutation of its three arguments. 

Given two relational structures $\A=(A;R^\A)$ and $\B=(B;R^\B)$, a
\emph{homomorphism} from $\A$ to $\B$ is a function $h:A\to B$ such that
$(h(x),h(y),h(z))\in R^\B$ whenever $(x,y,z)\in R^\A$. We denote the existence
of a homomorphism from $\A$ to $\B$ by $\A\to\B$. A pair of relational
structures $(\A,\B)$ with $\A\to\B$ is called a (PCSP) \emph{template}.

\paragraph*{Polymorphisms, minions}
We say that a function $f:A^n\to B$ has \emph{arity} $\ar(f) = n$.
For such a function, we say that a coordinate $i \in [n]$ is \emph{essential} if there exist $a_1, \ldots, a_n, a_i' \in A$ such that
\begin{align*}
\begin{array}{c}
     f(a_1, \ldots, a_{i-1}, a_i, a_{i+1}, \ldots, a_n)\\
     \neq\\
     f(a_1, \ldots, a_{i-1}, a_i', a_{i+1}, \ldots, a_n).
\end{array}
\end{align*}
An $n$-ary function $f$ has \emph{essential arity at most $k$} if it has at most $k$ essential coordinates.
Let $(\A,\B)$ be a template.
An $n$-ary function $f$ is a \emph{polymorphism} of $(\A,\B)$ if $(x_i,y_i,z_i)\in
R^\A$ for every $i\in [n]$ implies
\[
  (f(x_1,\ldots,x_n),f(y_1,\ldots,y_n),f(z_1,\ldots,z_n))\in R^\B.
\]
We denote by $\Pol^{(n)}(\A,\B)$ the set of $n$-ary polymorphisms of $(\A,\B)$
and by $\Pol(\A,\B)$ the set of all polymorphisms of $(\A,\B)$.
Polymorphisms form a minion, which we define below.
Given an $n$-ary function $f:A^n\to B$
and a map $\pi : [n] \to [m]$, an $m$-ary function $g:A^m\to B$ is called a \emph{minor} of $f$ given by the map $\pi$ if 
\[
  g(x_1, \ldots, x_m) = f(x_{\pi(1)}, \ldots, x_{\pi(n)}).
\]
We write $f\xrightarrow{\pi} g$ or $g = f^\pi$ if $g$ is the minor of $f$ given by the map $\pi$.
A \emph{minion} on a pair of sets $(A,B)$ is a non-empty set of functions from $A^n$ to $B$ (for $n \in \mathbb{N}$)
that is closed under taking minors.
For any template $(\A,\B)$, the set of polymorphisms $\Pol(\A,\B)$ is
a minion~\cite{BBKO21}.
A map $\psi:\minion{M}\to\minion{N}$ from a minion $\minion{M}$ to a minion $\minion{N}$ is a \emph{minion homomorphism}
if $\psi$ preserves arities, i.e., $\ar(g)=\ar(\psi(g))$ for any $g\in\minion{M}$, 
and $\psi$ preserves minors, i.e., for each $\pi:[n]\to[m]$ and each $n$-ary $g\in\minion{M}$ we have
$\psi(g)(x_{\pi(1)},\ldots,x_{\pi(n)}) = \psi(g(x_{\pi(1)},\ldots,x_{\pi(n)}))$.

\paragraph*{$\bm{\NP}$-hardness}
Certain properties of the polymorphisms of $(\A,\B)$ guarantee the \NP-hardness of
$\PCSP(\A,\B)$~\cite{BBKO21,BWZ21,Barto22:soda}. We use the following
result from~\cite{BWZ21}; see also~\cite{Barto22:soda}.
\begin{theorem}[{\cite[Corollary~4.2]{BWZ21}}]
\label{thm:hardnessCondition}
Let $k, \ell$ be positive integers, and let $(\A,\B)$ be a template. Suppose that $I$ is an assignment that takes any polymorphism $f \in \Pol(\A, \B)$ to a subset of $[\ar(f)]$ of size at most $k$. Suppose that for any chain of minors
\[
f_1 \xrightarrow{\pi_{1,2}} f_2 \xrightarrow{\pi_{2,3}} \cdots \xrightarrow{\pi_{\ell - 1, \ell}} f_\ell
\]
where $f_1, \ldots, f_\ell \in \Pol(\A, \B)$, there exist $1 \leq i < j \leq \ell$ such that $\pi_{i, j}( I( f_i )) \cap I(f_j) \neq \emptyset$, where $\pi_{i, j} = \pi_{j-1, j} \circ \cdots \circ \pi_{i, i+1}$. Then $\PCSP(\A, \B)$ is \NP-hard.
\end{theorem}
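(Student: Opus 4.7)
My plan is to derive \NP-hardness by reduction from a layered version of \textsc{Gap-Label-Cover}, using the assignment $I$ as a list-decoder from polymorphisms to candidate label sets. First I would fix an \NP-hard layered \textsc{Gap-Label-Cover} instance with $\ell$ layers $V_1,\dots,V_\ell$, alphabet $[n]$ (with $n$ large in terms of $k$ and $\ell$), and projection constraints between vertices in different layers, satisfying the standard PCP-plus-parallel-repetition promise: YES instances are fully satisfiable, while in NO instances no labelling assigning each variable a list of at most $k$ candidates can simultaneously satisfy all constraints along a length-$\ell$ chain of variables.

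Next, I would construct an instance $\X$ of $\PCSP(\A,\B)$ via a polymorphism-encoding gadget. For each LC variable $v$, introduce a cloud of variables indexed by $A^n$, constrained so that in any homomorphism $\X\to\B$ the restriction to the cloud is precisely an $n$-ary polymorphism $f_v\in\Pol^{(n)}(\A,\B)$; for each LC projection $\pi:[n]\to[n]$ between variables $v,w$, add constraints enforcing $f_w = f_v^\pi$. Completeness is routine: a satisfying LC assignment $s$ yields $\X\to\A$ by setting each cloud to the dictator on coordinate $s(v)$. For soundness, given a homomorphism $\X\to\B$, extract the $f_v$ and apply $I$ to get label sets $I(f_v)\subseteq[n]$ of size at most $k$. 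Any layered path $v_1,\dots,v_\ell$ yields a chain of minors $f_{v_1}\xrightarrow{\pi_{1,2}}\cdots\xrightarrow{\pi_{\ell-1,\ell}}f_{v_\ell}$ whose minor maps match the LC projections; the chain hypothesis then supplies $i<j$ with $\pi_{i,j}(I(f_{v_i}))\cap I(f_{v_j})\neq\emptyset$, i.e.\ a pair of labels jointly satisfying the composed LC constraint. This contradicts NO-case soundness, so the LC instance must be YES, completing the reduction.

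The main obstacle is engineering the gadget so that (i) the clouds genuinely recover $n$-ary polymorphisms from $\B$-homomorphisms (soundness of the encoding), (ii) dictators from $\A$ give valid homomorphisms (completeness of the encoding), and (iii) the inter-cloud constraints faithfully translate LC projections into minor maps on extracted polymorphisms. The standard workhorses here are free-structure or tensor-product constructions over $(\A,\B)$, but the delicate point is matching the parameters of layered PCPs with the fixed $(k,\ell)$ from the hypothesis, which typically requires careful choice of alphabet size and parallel-repetition amplification.
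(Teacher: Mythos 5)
This statement is cited verbatim from \cite[Corollary~4.2]{BWZ21}; the paper you are reading does not contain a proof of it (it is an imported tool, not one of the paper's own results), so there is no internal argument to compare your sketch against. That said, your outline does match, at a high level, the way such theorems are established in the PCSP literature, including in \cite{BWZ21}: reduce from a layered label-cover problem, use the standard polymorphism-indicator (``long code'' / free-structure) gadget to encode each label-cover variable as a cloud whose $\B$-assignments are forced to be $n$-ary polymorphisms, translate projection constraints into minor maps, and use the hypothesised assignment $I$ as a constant-size list decoder whose chain-intersection property contradicts the soundness of the label-cover instance.

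Two points in your sketch are slightly imprecise and should be tightened before this could count as a real proof. First, the arity $n$ of the extracted polymorphism $f_v$ is not a free parameter ``chosen large in terms of $k$ and $\ell$'' --- it is forced to equal the label-cover alphabet size at $v$, and in a layered instance different layers typically carry different alphabet sizes, so the minor maps are genuinely between arities $n_i$ and $n_{i+1}$, not $[n]\to[n]$. What is actually tuned is the label-cover soundness error (via parallel repetition / smoothness) relative to the fixed constants $k$ and $\ell$. Second, your phrasing of the NO-case soundness --- ``no labelling assigning each variable a list of at most $k$ candidates can simultaneously satisfy all constraints along a length-$\ell$ chain'' --- is not quite the right notion: what is needed is the stronger list-decoding soundness that for some single length-$\ell$ chain of variables, \emph{no} pair $i<j$ on that chain has a consistent pair of labels drawn from their respective $\leq k$-element lists, since the hypothesis only promises one agreeing pair per chain, not agreement everywhere along the chain. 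With the soundness notion stated in that form, your contradiction argument closes correctly. The remaining (and substantial) work that your proposal flags --- verifying that the gadget is sound and complete and that dictators certify completeness --- is exactly the content of the free-structure / indicator constructions in \cite{BBKO21,BWZ21}, so you are right to identify that as the load-bearing part.
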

Intuitively, one thinks of the mapping $I$ as nondeterministically ``decoding'' a polymorphism $f$ to one of its coordinates. We bound the number of elements in $I(f)$ in order to bound the nondeterminism. For the decoding to be good enough to work, it must satisfy the chain condition in the theorem.

Another important hardness result we will use is the following, which is a direct corollary of~\cite[Corollary 5.2]{BBKO21}, rephrased in terms of essential arity.
\begin{theorem}\label{thm:hardnessCondition2} 
Let $(\A,\B)$ be a template.
If $\Pol^{(3)}(\A, \B)$ contains only non-constant functions of essential arity 1, then $\PCSP(\A, \B)$ is \NP-hard.
\end{theorem}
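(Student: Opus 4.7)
The plan is to invoke~\cite[Corollary~5.3]{BBKO21}, of which Theorem~\ref{thm:hardnessCondition2} is, as noted in the text preceding the statement, an easy rephrasing in terms of essential arity. The cited corollary provides an \NP-hardness criterion under a structural hypothesis on the 3-ary polymorphisms of the template, originally formulated in terms of functions that depend on at most one coordinate. The content of the theorem above is that this hypothesis is precisely captured by saying each 3-ary polymorphism is non-constant and has essential arity exactly $1$.

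To verify the equivalence of the two formulations, the key observation is this: a function $f : A^3 \to B$ has essential arity $1$ and is non-constant if and only if there exist $i \in \{1, 2, 3\}$ and a non-constant unary $u : A \to B$ with $f(x_1, x_2, x_3) = u(x_i)$. One direction is immediate, since any such $u \circ \operatorname{pr}_i$ is plainly non-constant with exactly one essential coordinate. For the converse, if $f$ has a unique essential coordinate $i$, then by the definition of essentiality the two remaining coordinates may each be replaced by an arbitrary fixed value without affecting the output of $f$; applying this twice yields $f(x_1, x_2, x_3) = f(a, x_i, b) =: u(x_i)$ for any fixed $a, b \in A$, and $u$ is non-constant because $f$ is.

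Since the argument consists entirely of this rephrasing, no substantive obstacle arises: the \NP-hardness itself is established in the cited corollary, and the only care required is to note that ``essential arity $1$'' (as opposed to ``at most $1$'') together with non-constancy exactly matches the hypothesis of~\cite[Corollary~5.3]{BBKO21}, avoiding the degenerate case of constant functions (which would have essential arity $0$).
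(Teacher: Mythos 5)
Your proposal takes the same route as the paper: both simply invoke \cite[Corollary~5.3]{BBKO21} and observe that the hypothesis is a rephrasing of the cited condition. Your verification that a ternary $f$ is non-constant of essential arity $1$ precisely when $f = u \circ \operatorname{pr}_i$ for a non-constant unary $u$ and some $i \in [3]$ is correct, and this is exactly the translation the paper leaves implicit.
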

(In fact, if the condition of Theorem~\ref{thm:hardnessCondition2} holds, then the condition of Theorem~\ref{thm:hardnessCondition} also holds, for $k = 1$ and $\ell = 2$.)
Theorems~\ref{thm:hardnessCondition} and~\ref{thm:hardnessCondition2} hold for both decision and search versions of PCSP.

\paragraph*{Digraphs} 
Unless said otherwise, all digraphs in this paper are finite.
A loopless digraph $\G$ is a \emph{tournament} if, for any two distinct vertices $x,y$, precisely one of the pairs $(x,y)$ and $(y,x)$ is a directed edge of $\G$.
A tournament is \emph{transitive} if its edge relation is transitive. Note that a tournament is transitive if and only if it has no directed cycles of length 3. We will use the well-known result that
a tournament is acyclic if and only if it is transitive, cf.~\cite[Corollary~5a,~(1--2)]{tournaments}. We let $\T_i$ denote the transitive tournament on $i$ vertices.
Following~\cite{notation_old}, an \emph{oriented path} or \emph{oriented cycle} is a digraph formed by choosing an orientation for each edge of a path or cycle.
The \emph{net length} of an oriented path or cycle is the absolute value of the number of forward edges minus the number of backward edges,
for an arbitrary direction of the path or cycle. (By taking the absolute value of this difference, the direction we traverse the path or cycle does not matter.) In contrast, a \emph{directed cycle} is a digraph isomorphic to the digraph with edges $1 \to 2 \to \cdots \to k \to 1$ for some $k \in \mathbb{N}$.
An oriented path is \emph{minimal} if no subpath has a strictly greater net length.
We shall make use of the following result.

\begin{theorem}[{\cite[Claim 1]{notation_old}}] \label{thm:pathjoining}
  Let $\P,\P'$ be minimal oriented paths of the same net length. 
  Then there exists an oriented path $\Q$ that can be homomorphically mapped to $\P$ and $\P'$ with beginnings and ends preserved~%
  (with suitably chosen traversing directions). 
\end{theorem}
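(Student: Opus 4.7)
The plan is to build $Q$ inside the categorical product $P \times P'$. Since $P$ and $P'$ are minimal of net length $n$, their level functions take values in $[0, n]$: any level outside this range would produce a subpath of strictly greater net length, contradicting minimality. Form $D = P \times P'$ with vertex set $V(P) \times V(P')$ and with an edge $((x, y), (x', y'))$ whenever $(x, x') \in E(P)$ and $(y, y') \in E(P')$; the coordinate projections are then digraph homomorphisms onto $P$ and $P'$. Restrict to the subdigraph $D_0$ induced on diagonal vertices $(x, y)$ with $\lvl_P(x) = \lvl_{P'}(y)$. Each edge of $D$ raises both coordinate levels by one, so $D_0$ is closed under edges, and both $(a_0, a_0')$ (the pair of beginnings) and $(a_n, a_n')$ (the pair of ends) lie in $D_0$.

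Any undirected walk in $D_0$ from $(a_0, a_0')$ to $(a_n, a_n')$ yields the desired $Q$: interpret the walk as an oriented walk, where the $i$-th edge of $Q$ is forward if the $i$-th step traverses its edge of $D_0$ in the natural direction and backward otherwise. The coordinate projections restrict to homomorphisms $Q \to P$ and $Q \to P'$ that preserve beginnings and ends by construction, so the theorem reduces to showing that $(a_0, a_0')$ and $(a_n, a_n')$ lie in the same weakly connected component of $D_0$. I would prove this by induction on $n$. The base case $n = 0$ is immediate since minimality forces both $P$ and $P'$ to be single vertices. For the inductive step, one first shows that the last edges of $P$ and $P'$ are forward (they terminate at level $n$, and arriving from level $n + 1$ is ruled out by minimality), then picks the respective penultimate vertices $p \in P$ and $p' \in P'$, both at level $n - 1$. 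It suffices to show that $(p, p')$ is reachable from $(a_0, a_0')$ in $D_0$, since the two final forward edges extend this walk to $(a_n, a_n')$. Reaching $(p, p')$ amounts to the same joining problem for the initial segments of $P$ and $P'$ ending at $p$ and $p'$, which have net length $n - 1$.

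The main obstacle is that initial segments of minimal paths are not automatically minimal: an interior vertex of $P$ may reach level $n$, as in $a \to b \to c \leftarrow d \to e$ truncated just before $e$. To handle this I would strengthen the inductive statement to allow paths whose levels lie in $[0, n]$ without requiring strict minimality, or alternatively slice $P$ and $P'$ at the last occurrence of the top level $n$: the prefix pieces then have net length strictly less than $n$, while the suffix pieces live at levels $n$ and $n - 1$ where they can be joined directly via a zigzag construction. Either refinement should let the induction go through, though the bookkeeping is the delicate part of the argument.
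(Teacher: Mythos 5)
The paper does not prove this statement; it is imported from the literature as Theorem~\ref{thm:pathjoining}, so there is no in-paper argument to compare yours against. I will therefore assess your proposal on its own.

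Your framework is sound: by minimality both level functions take values in $[0,n]$; every edge of a digraph raises the level along an oriented path by exactly one, so the level-matching subgraph $D_0$ of the product is closed under edges; and any undirected walk in $D_0$ from $(a_0,a_0')$ to $(a_n,a_n')$ projects to the required $Q$ with endpoints preserved. The theorem does indeed reduce to the weak connectivity of those two pairs in $D_0$.

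The inductive step, however, contains a genuine gap which your two proposed patches do not close. You want to reach the penultimate pair $(p,p')$ by applying the argument to the initial segments $P_0,P_0'$ ending at $p,p'$. But ``$(p,p')$ is reachable from $(a_0,a_0')$ in the $D_0$ of $P_0\times P_0'$'' is strictly stronger than reachability in the $D_0$ of $P\times P'$, and the stronger statement can fail. Take $P$ to be $a\to b\to c\leftarrow d\to g$ (levels $0,1,2,1,2$) and $P'$ to be $e\to f\to h$ (levels $0,1,2$); both are minimal of net length $2$, and their last edges are forward with penultimate vertices $d$ and $f$. Then $P_0=a\to b\to c\leftarrow d$ and $P_0'=e\to f$. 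In the $D_0$ built from $P_0\times P_0'$ the vertex $(d,f)$ is isolated: the only edge of $P_0$ touching $d$ goes to the level-$2$ vertex $c$, while $P_0'$ has no level-$2$ vertex, so no compatible product edge exists. Hence $(a,e)$ and $(d,f)$ lie in different components of the truncated $D_0$, even though they are connected in the full $D_0$ --- but only via the level-$2$ pair $(c,h)$, which the truncation has discarded. This same pair $P_0,P_0'$ also refutes your Fix~1: both have levels in $[0,2]$ and start at level $0$ and end at level $1$, yet their start and end pairs are disconnected in $D_0$, so replacing minimality by a bound on the levels is not a valid strengthening of the inductive hypothesis. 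Fix~2 as written is vacuous: for a minimal path of net length $n$ the last vertex at level $n$ is the endpoint itself, so ``slicing at the last occurrence of level $n$'' gives back the whole path and an empty suffix. A correct proof needs an inductive invariant (or a decomposition) that retains the high-level vertices needed to route the walk, rather than truncating them away.
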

\noindent Note that $\Q$ needs to be a minimal path of the same net length as $\P$ and $\P'$. Hence, Theorem~\ref{thm:pathjoining} can be extended to a finite number of minimal oriented paths $\P,\P',\P'',\dots$ of equal net length.

For an integer $i\geq 2$, we let $\D_i$ be the directed cycle on $i$ vertices and $\L_i$ be the directed path on $i$ vertices. We also let $\D_1$ be a single vertex with a loop and $\L_1$ be an isolated vertex, while $\L_\omega=(\mathbb{N}; \{(x,x+1)\mid x\in\mathbb{N}\})$ denotes the infinite directed path.
A digraph $\G$ is \emph{balanced} if each of its
oriented cycles has zero net length.
Equivalently, this means that $\G\to\L_\omega$ \cite{HN04}. We will show
(cf.~Proposition~\ref{prop_basic_balanced}) that this is also equivalent to the condition $\HG\to\zaff$, where $\zaff$ is the relational structure over the domain $\mathbb{Z}$ whose unique, ternary relation is $\{(x,y,z)\in\mathbb{Z}^3\mid x+y+z=1\}$, see Footnote~\ref{footnote_barto_only_holds_if_finite}.

\paragraph*{Trees} We will need the following fact about binary trees that can be easily shown by induction. All of our trees will be rooted.

\begin{lemma}\label{lem:binary-trees}
    A binary tree with more than $2^n$ leaves must have a path from the root to a leaf containing at least $n + 2$ vertices.
\end{lemma}

\paragraph*{Primitive positive formulas, definitions} We will define the notions of primitive positive (pp) formulas and definitions for the special case of relational structures with one ternary relation symbol $R$. A \emph{pp-formula} is an existentially quantified conjunction of (positive) atomic formulas, that are of the form $x = y$ or $R(x, y, z)$ for variables $x, y, z$. For example,
\[
  \exists y \exists z\ R(x, y, z) \land (x = y)
\]
is a pp-formula. For a relational structure $\A = (A; R^\A)$ and a pp-formula $\phi(x, y, z)$ in which only $x, y, z$ are free, we define $\phi^\A$ as a ternary relation that contains a tuple $(a, b, c)$ if and only if 
substituting $(a, b, c)$ for $(x, y, z)$, interpreting existential quantification as being over $A$, and interpreting $R$ as $R^\A$ leads to a true statement. We say that $\phi$ interpreted in $\A = (A; R^A)$ pp-defines $\B = (B; R^B)$ if $A = B$ and $R^B = \phi^\A$.

For templates $(\A, \B)$ and $(\A',\B')$, we say that $(\A, \B)$
\emph{pp-defines} $(\A', \B')$ whenever there exists a pp-formula $\phi(x, y,
z)$ in which precisely $x, y, z$ are free, for which $\phi$ interpreted in $\A$
pp-defines $\A'$, and $\phi$ interpreted in $\B$ pp-defines $\B'$. We shall use
the following result from~\cite{BG21:sicomp}, cf.
also~\cite{Jeavons98:algebraic,Chen09:survey} for
the analogous result for (non-promise) CSPs.

\begin{theorem}
[\cite{BG21:sicomp}]
\label{thm_pp_defn_and_reductions}
    Suppose the template $(\A,\B)$ pp-defines $(\A', \B')$. Then $\PCSP(\A', \B')$ reduces to $\PCSP(\A, \B)$ in logarithmic space.
\end{theorem}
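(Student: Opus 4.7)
The plan is the standard ``gadget replacement'' reduction. Fix a pp-formula $\phi(x,y,z) = \exists y_1 \cdots \exists y_k \bigwedge_{j=1}^\ell \alpha_j$ witnessing the pp-definition, so that $R^{\A'} = \phi^\A$ and $R^{\B'} = \phi^\B$. Since $\phi$ is a fixed parameter (independent of the input), I may assume by a one-off preprocessing that every equality atom $u=v$ has been eliminated by identifying its two variables; what remains is a pp-formula in which each $\alpha_j$ has the form $R(u_1^j, u_2^j, u_3^j)$ with $u_i^j \in \{x,y,z,y_1,\dots,y_k\}$.

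Given an input $\X=(X;R^\X)$ of $\PCSP(\A',\B')$, I build an instance $\mathbf{Y}=(Y;R^\mathbf{Y})$ of $\PCSP(\A,\B)$ as follows. The universe $Y$ consists of $X$ together with fresh variables $y_1^t,\dots,y_k^t$ for each tuple $t=(a,b,c)\in R^\X$, and $R^\mathbf{Y}$ contains, for each such $t$ and each atom $\alpha_j=R(u_1^j,u_2^j,u_3^j)$, the triple obtained by substituting $a,b,c,y_1^t,\dots,y_k^t$ for $x,y,z,y_1,\dots,y_k$ in $(u_1^j,u_2^j,u_3^j)$.

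To verify the reduction, I argue separately in each direction. For completeness, given $h:\X\to\A'$, each tuple $t=(a,b,c)\in R^\X$ satisfies $(h(a),h(b),h(c))\in R^{\A'}=\phi^\A$, so I can pick existential witnesses $w_1^t,\dots,w_k^t\in A$ and extend $h$ by setting $h(y_i^t):=w_i^t$; this extension is a homomorphism $\mathbf{Y}\to\A$. For soundness and the search version, given any homomorphism $h:\mathbf{Y}\to\B$, the values $h(y_i^t)$ witness that $\phi(h(a),h(b),h(c))$ holds in $\B$ for every $(a,b,c)\in R^\X$, so the restriction $h|_X$ is a homomorphism $\X\to\B'$. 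The construction runs in logarithmic space: it scans each hyperedge of $\X$ once, uses a counter of $O(\log|R^\X|)$ bits to name the fresh variables $y_i^t$ (for instance by concatenating $i$ with the position of $t$ in the enumeration), and emits only a constant number of tuples per hyperedge.

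The only mild subtlety is maintaining consistent names for the existential witnesses and processing equality atoms within the logspace bound; both issues dissolve because $\phi$ is a constant, so equality handling can be performed once offline and witness naming requires nothing more than a position counter. Otherwise the argument is entirely routine.
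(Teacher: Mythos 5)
The paper simply cites this theorem from~\cite{BG21:sicomp} and gives no proof, so there is no internal argument to compare against; your gadget-replacement construction is the standard one and its main thrust is sound. The flaw is in the offline equality elimination: if an equality atom relates two \emph{free} variables, say $x=y$, substituting $x$ for $y$ yields a formula no longer depending on $y$, whose interpretation is strictly larger than $\phi^\A$ (it accepts $(a,b,c)$ for arbitrary $b$ whenever it accepts $(a,a,c)$). With that preprocessing, the soundness direction breaks: a homomorphism $h\colon\mathbf{Y}\to\B$ is no longer forced to satisfy $h(a)=h(b)$ on a hyperedge $(a,b,c)\in R^\X$, so $h|_X$ need not land in $R^{\B'}=\phi^\B$.

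The standard fix is to identify the relevant vertices of $X$ itself when constructing $\mathbf{Y}$: quotient $X$ by the equivalence relation generated by all pairs imposed, over all hyperedges of $\X$, by the free--free equality atoms of $\phi$. This step depends on the input, not merely on $\phi$, and establishing that the quotient map is computable in logarithmic space requires undirected reachability to be in logspace (Reingold), so it is a genuinely nontrivial addition rather than a one-off rewriting of the formula. Your substitution argument is correct whenever every equality atom involves at least one existentially quantified variable, which is the case for the formula built in Lemma~\ref{lem:phi}, so the application in this paper goes through; the oversight is only in the generality of the claimed preprocessing.
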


\section{Breaking the promise from the right}
\label{sec_first_theorem}

Let $\G$ be a digraph such that $(\inn{1}{3},\HG)$ is a valid template (equivalently, the edge set of $\G$ is nonempty).
In this section, we will prove the following result.

\newcommand{\toeq}{\ensuremath{\Rightarrow}}

\mainone*

The tractability part of Theorem~\ref{thm:main1} follows from the next, graph-theoretic result.\footnote{The second part of the proposition that follows may be derived from a result in~\cite{Brandts22:phd}. We include a shorter, self-contained proof for completeness.} 

\begin{restatable}{proposition}{propbalanced}
\label{prop_basic_balanced}
For any digraph $\G$, the following holds
\begin{enumerate}
\item $\HG\to \zaff$ if and only if 
$\G\to \L_\omega$ if and only if  $\G$ is balanced, and
\item 
$\zaff \to\HG$ if and only if $\D_i\to \G$ for some $i\in\{1,2,3\}$.
Moreover, if a homomorphism from $\zaff$ to $\HG$ exists, then it can be efficiently computed.
\end{enumerate}
\end{restatable}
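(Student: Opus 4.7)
To prove part 1, my plan is to chain the two equivalences. The equivalence $\G \to \L_\omega \iff \G$ is balanced is the classical result cited from~\cite{HN04}, so only $\HG \to \zaff \iff \G \to \L_\omega$ needs work. For the forward direction, I observe that any homomorphism $h : \HG \to \zaff$ must satisfy $h(y) = 1 - 2h(x)$ on every edge $(x,y)$ of $\G$, since the tuple $(x,x,y) \in R^{\HG}$ must be mapped into $R^{\zaff}$; then, traversing an arbitrary oriented cycle, the composition of the affine maps $t \mapsto 1 - 2t$ (for forward edges) and its inverse $t \mapsto (1-t)/2$ (for backward edges) must be the identity, and since the leading coefficient of this composition equals $(-2)^{a-b}$, where $a,b$ count forward and backward edges, we obtain $a = b$, so every oriented cycle has net length zero. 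For the converse, given a level function $\lvl : V(\G) \to \mathbb{N}$ witnessing $\G \to \L_\omega$, I define $h(v) = (1 - (-2)^{\lvl(v)})/3$; this is an integer (since $(-2)^k \equiv 1 \pmod 3$), and a direct calculation confirms $2h(x) + h(y) = 1$ whenever $(x,y) \in E(\G)$.

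The easy direction of part 2, that $\D_i \to \G$ with $i \in \{1,2,3\}$ implies $\zaff \to \HG$, I handle by exhibiting three canonical homomorphisms $\zaff \to \HD_i$ and composing with $\HD_i \to \HG$: for $i=1$, the constant map to the looped vertex; for $i=2$, where $\HD_2 = \NAE$, the positive/non-positive split explained in the introduction; and for $i=3$, reduction modulo $3$, since $R^{\HD_3}$ is exactly the set of triples summing to $1 \bmod 3$. All three homomorphisms are polynomial-time computable, and detecting the presence of a loop, a $2$-cycle, or a closed walk of length $3$ in $\G$ takes polynomial time, proving the ``moreover'' clause.

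The main obstacle is the converse of part 2, for which my plan is proof by contradiction: suppose $\G$ contains none of $\D_1, \D_2, \D_3$, yet $h : \zaff \to \HG$ exists. I case-split on $|h(\mathbb{Z})|$. The cases $|h(\mathbb{Z})| = 1$ and $|h(\mathbb{Z})| = 2$ fall quickly, because the tuples $(a, a, 1-2a) \in R^{\zaff}$ force either a loop on the sole image vertex or both directed edges between the two image vertices. The decisive case is $|h(\mathbb{Z})| \geq 3$; here the pivotal step is to prove that the subgraph of $\G$ induced on $h(\mathbb{Z})$ is a loopless tournament. For any two distinct image vertices $u, v$, choosing $a \in h^{-1}(u)$ and $b \in h^{-1}(v)$, the tuple $(a, b, 1-a-b) \in R^{\zaff}$ maps to $(u, v, h(1-a-b)) \in R^{\HG}$; the shape of $R^{\HG}$ forces $h(1-a-b) \in \{u,v\}$, and (together with the no-$\D_2$ assumption) pins down exactly one of $(u,v), (v,u)$ in $E(\G)$. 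Iterating $x \mapsto 1-2x$ starting at $x=0$ then produces an infinite walk $h(0), h(1), h(-1), h(3), h(-5), \ldots$ inside this finite tournament; by pigeonhole a closed subwalk exists, contributing a directed cycle to the tournament, which --- by the cited fact that a non-transitive tournament contains a $3$-cycle --- yields $\D_3 \to \G$, the desired contradiction. The hardest conceptual step is recognising that the tuple constraints from $\zaff$ force the tournament property on the image; once that is in hand, the cycle-to-$3$-cycle conclusion is automatic from the tournament theory already quoted in the preliminaries.
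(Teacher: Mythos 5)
Your part 2 argument follows essentially the same route as the paper: reduce to the case that $h$ is onto (you do this via a case split on $|h(\mathbb{Z})|$, the paper by replacing $\HG$ with the image), show the image is a loopless tournament via the tuples $(a,b,1-a-b)$, then iterate $x\mapsto 1-2x$ and use the pigeonhole principle together with the acyclic~$\Leftrightarrow$~transitive characterisation of tournaments to extract a $3$-cycle. That part is fine.

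Your part 1 takes a genuinely different, more computational route. The paper introduces the auxiliary digraph $\T=(\mathbb{Z};\{(a,1-2a)\})$, shows it is a disjoint union of forward-infinite directed paths (hence acyclic and mapping into $\L_\omega$), and observes that $\HG\to\zaff$ is equivalent to the digraph homomorphism $\G\to\T$. Your explicit formula $h(v)=(1-(-2)^{\lvl(v)})/3$ for the converse is a nice closed form for exactly this homomorphism and is correct.

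However, there is a gap in your forward direction. You assert that the composition $\psi$ of the affine maps $t\mapsto 1-2t$ and $t\mapsto (1-t)/2$ around an oriented cycle \emph{must be the identity}, and then read off $a=b$ from its leading coefficient. What traversing the cycle actually gives you is only that $\psi$ fixes the single value $h(v_0)$, and an affine map $t\mapsto \lambda t+c$ with $\lambda\neq 1$ can certainly fix an integer. So the step from ``$\psi$ fixes $h(v_0)$'' to ``$\psi=\mathrm{id}$'' is not justified as written. The claim is true, but the argument needs one more observation: both generating maps $t\mapsto 1-2t$ and $t\mapsto(1-t)/2$ fix $1/3$, so every composition $\psi$ fixes $1/3$. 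Since $h(v_0)\in\mathbb{Z}$ is a \emph{second}, distinct fixed point, the affine map $\psi$ must be the identity, and in particular its leading coefficient $(-2)^{a-b}$ equals~$1$, giving $a=b$. (Equivalently: if $a\neq b$ then $1/3$ is the unique fixed point of $\psi$, contradicting $h(v_0)\in\mathbb{Z}$.) With this insertion your part 1 is complete; it is arguably more self-contained than the paper's detour through $\T$, and the same $1/3$ that appears in your explicit converse formula is the invariant doing the work in the forward direction.
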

\begin{proof}
    For a relational structure $\A = (A; R)$ with one ternary symmetric relation $R$, define $\widetilde{\A}$ to be the graph $(A, E)$ where $(x, y) \in E$ if and only if $(x, x, y) \in R$. Note that $\HG \to \A$ if and only if $\G \to \widetilde{\A}$.\footnote{In other words, $\widehat{\cdot}$ and $\widetilde{\cdot}$ form a \emph{Galois connection} under the preordering given by $\to$.}
    
    Observe that $\widetilde{\zaff}=(\mathbb{Z};\{(a,1-2a):a\in\mathbb{Z}\})$. This structure is a disjoint union of (countably many) forward-infinite directed paths: {First, every vertex has exactly one outgoing and at most one incoming edge~%
    (odd $a$ has in-degree $1$ while even $a$ has in-degree $0$).
    Second, note that $(0,1)$ and $(1,-1)$ are edges in $\widetilde{\zaff}$ and for the remaining edges $(a,1-2a)$ we have $|a|<|1-2a|$ which proves that forward paths lead away from $0$ and that $\widetilde{\zaff}$ is acyclic.} Thus $\widetilde{\zaff} \leftrightarrows \L_\omega$. It follows that $\G \to \widetilde{\zaff}$ if and only if $\G \to \L_\omega$; respectively, $\widetilde{\zaff} \to \G$ if and only if $\L_\omega \to \G$.
    \begin{enumerate}[wide]
        \item $\HG \to \zaff$ if and only if $\G \to \widetilde{\zaff}$, if and only if $\G \to \L_\omega$. For the second equivalence: as $\G$ is finite, $\G \to \L_\omega$ if and only if $\G \to \L_{|\G|}$, which is equivalent to $\G$ being balanced by e.g.~\cite[Proposition~1.13]{HN04}.
    \item For the ``if'' direction: a homomorphism $h_i : \Z \to \HD_i$ exists for $i \in [3]$ (namely $h_1(x) = 0, h_2(x) = [x > 0], h_3(x) = x \bmod 3$), so $\D_i \to \G$ for $i \in [3]$ implies $\zaff \to \HD_i \to \HG$.
        Conversely, let $f:\zaff\to\HG$
        be a homomorphism, and assume that 
        $f$ is surjective~%
        (we can take the image of the map in place of $\HG$) and that $\D_1$ and $\D_2$ do not map to $\G$. 
        We will show that $\G$ is a non-transitive tournament; as such, the graph must contain $\D_3$ and the proof is finished.\footnote{To see why any non-transitive tournament $\mathbf{T}$ must contain $\D_3$, consider the smallest cycle within $\mathbf{T}$. It cannot be of length 1 or 2; if it is of length 3 then we are done; if it is of length 4 or above, then it cannot be minimal since any chord within the cycle gives rise to a smaller cycle.}
        Take any $v\neq v'$ vertices of $\G$ and find $a,a'$ such that
        $f(a)=v,f(a')=v'$.
        The triplet $(a,a',1-a-a')$ belongs to the relation of $\zaff$. 
        Therefore, either $(v,v')$ or $(v',v)$ is an edge of $\G$ --- since $\G$ additionally does not contain directed cycles of length at most 2 by assumption, we have that $\G$ is a tournament.
        Since $f : \Z \to \HG$, we have $\widetilde{\zaff} \to \G$ and hence $\L_\omega \to \G$, so by the pigeonhole principle
        $\G$ must contain a directed cycle and, thus, it
        cannot be transitive.
    \end{enumerate}
    
Finally, if $\Z\to\HG$, using the second part of the lemma we deduce that $\D_i\to\G$ (and hence $\HD_i\to\HG$) for some $i\in\{1,2,3\}$. Both homomorphisms $\HD_i\to\HG$ and $\Z\to\HD_i$ are efficiently computable, and their composition yields a concrete homomorphism from $\Z$ to $\HG$.
\end{proof}

\begin{proof}[Proof of tractability in Theorem~\ref{thm:main1}]
If $\D_i\to\G$ for some $i\in\{1,2,3\}$, it follows from Proposition~\ref{prop_basic_balanced} that $\inn{1}{3}\to \Z \to \HG$. Therefore, $\PCSP(\inn{1}{3}, \HG)$ trivially reduces to $\CSP(\Z)$, which is tractable as it corresponds to solving linear Diophantine systems~\cite{KB79}.
\end{proof}

In the rest of this section, we prove
the hardness part of Theorem~\ref{thm:main1}. We note that this proof can be seen as a generalisation of the proof of hardness of $\PCSP(\inn{1}{3}, \HT_3)$ from~\cite{Barto21:stacs}, where $\T_k$ is a directed tournament on $k$ vertices.\footnote{$\HT_3$ is denoted by $\T_1$ in~\cite{Barto21:stacs}.}
To that end, fix $\G$ and suppose it contains no directed cycles of length at most 3.
If we consider the edge relation of $\G$, we see that it is irreflexive (since $\G$ has no loops) and antisymmetric (since $\G$ has no directed cycles of length 2). Unfortunately it is not transitive --- nonetheless we will see the edge relation as a kind of weak order relation, and we will prove that it is ``transitive enough'' for our purposes. Thus, write $<$ for the edge relation of $\G$ (keeping in mind that $<$ is not in general transitive), and define $\leq, >, \geq$ in the obvious way. Write $x \conn y$ if $x = y$ or $x < y$ or $x > y$.

In this section, all polymorphisms will be from 
the polymorphism minion 
$\Pol(\inn{1}{3}, \HG)$. Note that such polymorphisms are simply functions from $\{0, 1\}^n$ to $V$, the vertex set of $\G$. We can see such a function as a function from $2^{[n]}$ to $V$, i.e.~from subsets of $[n]$ to $V$. The fact that $f \in \Pol^{(n)}(\inn{1}{3}, \HG)$ is a polymorphism then implies that, for any partition $A, B, C$ of $[n]$, two of $f(A), f(B), f(C)$ are equal, and the last is strictly greater (i.e., there is an edge in $\G$ from the two equal elements to the third one). In this interpretation, for $\pi : [n] \to [m]$, observe that $f^\pi = f \circ \pi^{-1}$, where $\pi^{-1} : 2^{[m]} \to 2^{[n]}$ is the preimage function: $x \in \pi^{-1}(S)$ if and only if $\pi(x) \in S$.

\begin{definition}
\label{defn_good_set}
For a polymorphism $f \in \Pol^{(n)}(\inn{1}{3}, \HG)$, call a set $X \subseteq [n]$ a \emph{hitting set} (for $f$) if it has a subset $Y \subseteq X$ such that, for all $Z\subseteq [n]$ with $f(Z) = f(Y)$, we have $X \cap Z \neq \emptyset$. In this case, say that $f(Y)$ is a \emph{hitting value} for $X$.
\end{definition}

Note that the property of being a hitting set is upwards closed; i.e., if $X$ is hitting and $W \supseteq X$, then $W$ is hitting as well. 
\begin{theorem}\label{thm:goodsets}
Let $\G$ be a digraph without directed cycles of length at most 3 and let $N$ be the number of vertices of $\G$. Then every polymorphism $f \in \Pol(\inn{1}{3}, \HG)$ has a nonempty hitting set of size at most $(N+1)(1 + N + N^2 2^{N})$.
\end{theorem}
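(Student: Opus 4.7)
The polymorphism condition, translated to the set-indexed form $f : 2^{[n]} \to V$, says that for every ordered partition $A \sqcup B \sqcup C = [n]$, the triple $(f(A), f(B), f(C))$ lies in $R^{\HH}$: two of the values coincide at some $u$ and the third is a $<$-successor $v$ with $(u,v) \in E(\H)$. Specialising to $(\emptyset, \emptyset, [n])$ and using that $\H$ is loopless gives $f(\emptyset) < f([n])$; more usefully, whenever $A, B$ are disjoint with $f(A) = f(B) = u$, the partition $(A, B, [n]\setminus(A\cup B))$ forces $f([n] \setminus (A \cup B)) > u$. I will call this the \emph{complement-climb} principle; it is the main structural engine.

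My plan is to target a $<$-maximal value $v^*$ in the image of $f$ (a ``top'' value, which exists by finiteness of $V$) and build the hitting set around its level set $\mathcal{F}_{v^*} := f^{-1}(v^*)$. Complement-climb immediately yields that $\mathcal{F}_{v^*}$ is an intersecting family, since two disjoint members would produce a $<$-successor of $v^*$, contradicting topness. The task is then to promote ``intersecting'' to ``small transversal containing a witness $Y \in \mathcal{F}_{v^*}$''. For this I would run a greedy peel: pick $Y_1 \in \mathcal{F}_{v^*}$ and put $Y_1 \subseteq X$; if $X$ is not yet a transversal, find $Y_2 \in \mathcal{F}_{v^*}$ disjoint from $X$, augment $X$ by a controlled small subset, and iterate. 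Each newly-discovered disjoint pair $(Y_i, Y_j)$ in $\mathcal{F}_{v^*}$ forces, via complement-climb, a value $w_{ij} > v^*$; because $\H$ has no directed cycles of length at most $3$, the short $<$-chains produced this way cannot immediately fold back, which is what lets the short-cycle-free hypothesis bite.

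If no single top value delivers the transversal within budget, I would iterate the procedure over candidate values, descending through the image of $f$. The factor $N+1$ in the bound reflects a pigeonhole over the at most $N$ distinct target values (plus an initial seed); the per-round cost $1 + N + N^2 2^N$ plausibly decomposes into one element for the current witness, $N$ elements for direct per-value hits, and the dominant $N^2 2^N$ from an enumeration over ordered pairs $(u,v) \in V \times V$ crossed with subsets $S \subseteq V$ tracking which values have already been ``closed off''. The main obstacle I anticipate is exactly this accounting: because $\H$ may contain cycles of length $\geq 4$, a naive pigeonhole on $<$-chain length cannot terminate the iteration, so the procedure's state must be tracked not just by the current target value but by the subset of $V$ already explored (hence the $2^N$ factor), and one has to argue that this enriched state stabilises within $N+1$ rounds. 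Matching this bookkeeping to the explicit constant $(N+1)(1 + N + N^2 2^N)$ is the core technical difficulty, and I expect it to require delicately using the absence of $\D_1, \D_2, \D_3$ in $\H$ to rule out the branches of the iteration that would otherwise blow up the count.
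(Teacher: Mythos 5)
Your ``complement-climb'' observation is correct and is indeed one of the paper's two basic structural laws (the paper calls it the disjointness law, combined with the polymorphism identity on partitions). But the next step is where the argument breaks: you assume a $<$-maximal value $v^*$ exists ``by finiteness of $V$''. The relation $<$ here is just the edge relation of $\H$; the hypothesis only forbids directed cycles of length at most $3$, so $\H$ can contain a directed $4$-cycle, in which case \emph{every} vertex has an out-neighbour and there is no sink — no $<$-maximal value. Finiteness does not rescue this because $<$ is not a partial order (it is not even transitive). Your claim that $\mathcal{F}_{v^*}$ is intersecting depends entirely on the nonexistence of a $<$-successor of $v^*$, so the whole plan is grounded on a false premise.

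Even setting that aside, the plan has no mechanism to make ``intersecting family'' imply ``bounded transversal''. Intersecting families on $[n]$ can force transversals of size $\Omega(n)$ (e.g.\ all $\lceil (n+1)/2\rceil$-subsets of $[n]$), so greedy peeling of $\mathcal{F}_{v^*}$ does not terminate within any bound depending only on $N$; you flag this as ``the core technical difficulty'' but propose no idea to close it, only a speculative matching of shapes to the constant. The paper's actual route is different and sidesteps both problems: it never looks for a top value or a transversal of a fixed level set. Instead it builds (via a partition-merging procedure and two ``union'' lemmas) a chain of bounded-size sets $X_1,\dots,X_{N+1}$ with $f(X_1)<\cdots<f(X_{N+1})$, and then invokes the absence of $3$-cycles to conclude (by pigeonhole and the contrapositive of the disjointness law) that some specific bounded set $X_j$ in the chain is already a transversal of $f^{-1}(f(X_i))$ for some $i$. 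The trick is that the transversal is one of the sets you already built, not something you must search for inside a potentially huge intersecting family. To repair your proof you would need exactly this kind of ``climb until two levels are $\not\conn$'' argument rather than a ``climb to the top'' argument.
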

As we shall see next, the result above is sufficient for establishing the \NP-hardness part of Theorem~\ref{thm:main1}.

\begin{proof}[Proof of Theorem~\ref{thm:main1}]
  For any $f \in \Pol(\inn{1}{3}, \HG)$, define $I(f)$ to be a 
hitting set for $f$ of minimum size --- it exists and its size is bounded by some function of $N$ by Theorem~\ref{thm:goodsets}. Observe that, as $\G$ is fixed, $N$ is a constant, so the size of $I(f)$ is bounded by a constant. Moreover, it is nonempty as $\emptyset$ is never a hitting set.
Consider any $N + 1$ polymorphisms of $\Pol(\inn{1}{3}, \HG)$ connected by a chain of minors. By the pigeonhole principle, polymorphisms $f \xrightarrow{\pi} g$ must exist within this chain such that $I(f)$ and $I(g)$ have a hitting value in common. Suppose this value is $c$. Thus $I(g)$ contains a set $X$ such that $g(X) = c$, and $I(f)$ intersects all sets $Y$ within the domain of $f$ for which $f(Y) = c$. Since $g(X) = c$, we have $f(\pi^{-1}(X)) = c$ and thus $I(f) \cap \pi^{-1}(X) \neq \emptyset$. Thus $\pi(I(f)) \cap I(g) \neq \emptyset$. By Theorem~\ref{thm:hardnessCondition} (with $\ell=N+1$ and $k=(N+1)(1 + N + N^2 2^{N})$), it follows that $\PCSP(\inn{1}{3}, \HG)$ is \NP-hard.
\end{proof}

In the remainder of this section, we shall prove Theorem~\ref{thm:goodsets}. Henceforth, we fix
a directed graph $\G$ with $N$ vertices and no directed cycles of length at most $3$, and
a polymorphism $f \in \Pol^{(n)}(\inn{1}{3}, \HG)$.
We first show that, in this case, $[n]$ is a hitting set for $f$.
\begin{lemma}
\label{lem_all_is_good}
    $[n]$ is a hitting set for $f$.
\end{lemma}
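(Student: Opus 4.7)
The plan is to show that $Y := [n]$ itself witnesses that $X = [n]$ is a hitting set. By Definition~\ref{defn_good_set}, for this choice of $Y$ it is enough to verify that $f([n]) \neq f(\emptyset)$: indeed, every $Z \subseteq [n]$ satisfying $f(Z) = f([n])$ is then forced to be non-empty, whence $X \cap Z = Z \neq \emptyset$.

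To establish $f([n]) \neq f(\emptyset)$, I apply the polymorphism condition to the (degenerate) partition $([n], \emptyset, \emptyset)$ of $[n]$. In $\{0,1\}^n$ this corresponds to the triple of tuples $(\vec 1, \vec 0, \vec 0)$, and in every coordinate we have $(1,0,0) \in R^{\inn{1}{3}}$, so
\[
  (f([n]), f(\emptyset), f(\emptyset)) \;=\; (f(\vec 1), f(\vec 0), f(\vec 0)) \;\in\; R^{\HH}.
\]
Every tuple of $R^{\HH}$ has one of the forms $(x,x,y)$, $(x,y,x)$, $(y,x,x)$ for some edge $(x,y)$ of $\H$; in particular, a constant tuple $(v,v,v)$ lies in $R^{\HH}$ only if $\H$ has a loop at $v$. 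Since $\H$ has no directed cycles of length at most $3$, it has no loops, so $(v,v,v) \notin R^{\HH}$ for every vertex $v$. Applied to the displayed tuple this forces $f([n]) \neq f(\emptyset)$, as required.

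The proof is essentially a one-line observation: looplessness of $\H$, combined with the polymorphism condition at the degenerate partition of $[n]$, prevents $f$ from taking the same value on the all-zeros and all-ones inputs. There is no genuine obstacle — the lemma merely rules out the trivial failure mode of the hitting-set definition and provides the base case from which the quantitative bound in Theorem~\ref{thm:goodsets} will be derived.
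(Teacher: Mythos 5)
Your proof is correct and takes essentially the same approach as the paper: you apply the polymorphism condition to the degenerate partition $(\emptyset,\emptyset,[n])$ (equivalently, the coordinate-wise tuple $(0,0,1)$ or a permutation thereof), obtain that $(f(\emptyset),f(\emptyset),f([n]))$ is a hyperedge of $\HH$, and conclude from looplessness of $\H$ that $f(\emptyset)\neq f([n])$, so that $Y=[n]$ witnesses the hitting-set condition for $X=[n]$. This matches the paper's one-line argument exactly.
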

\begin{proof}
   Since $f$ is a polymorphism, the tuple
   \[(f(\emptyset),f(\emptyset),f([n]))\] 
   must belong to the relation of $\HG$, which means that $f(\emptyset)<f([n])$. Since $\G$ has no loops, $f(\emptyset)\neq f([n])$. We then have from Definition~\ref{defn_good_set} that $[n]$ is a hitting set, as witnessed by the subset $[n]\subseteq [n]$.
\end{proof}

Next, we give two laws governing $f$.

\begin{lemma}[Disjointness law]
For disjoint $A, B \subseteq [n]$, we have $f(A) \conn f(B)$.
\end{lemma}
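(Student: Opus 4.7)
The plan is to reduce the statement directly to the polymorphism condition for $(\inn{1}{3},\HH)$ recalled at the start of this section, applied to the three-part partition that naturally extends $A,B$ to all of $[n]$. First I would set $C := [n]\setminus(A\cup B)$, so that $(A,B,C)$ is an ordered partition of $[n]$, possibly with empty parts. The polymorphism property then forces two of the three values $f(A),f(B),f(C)$ to coincide (to some common value $x$) while the third equals some $y$ with $x<y$ in $\H$.

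From there, the argument amounts to a short case analysis on which pair among $\{f(A),f(B),f(C)\}$ is the coincident one. If $f(A)=f(B)$, then $f(A)\conn f(B)$ holds by definition of $\conn$. If instead $f(A)=f(C)<f(B)$, we obtain $f(A)<f(B)$; and if $f(B)=f(C)<f(A)$, we obtain $f(B)<f(A)$. In either of these two cases $f(A)\conn f(B)$ as well, so in all three cases the desired conclusion follows.

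There is no real obstacle here; the one point worth flagging is that the partition $(A,B,C)$ may have empty parts (for instance when $A\cup B=[n]$, or when $A=\emptyset$), so I am implicitly using the polymorphism condition in the generality that allows some of the three parts to be empty. This is the same generality already tacitly used in the proof of Lemma~\ref{lem_all_is_good}, where the partition $(\emptyset,\emptyset,[n])$ is the one yielding $f(\emptyset)<f([n])$, so there is no inconsistency.
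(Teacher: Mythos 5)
Your proof is correct and follows essentially the same route as the paper's: extend $A,B$ to the three-part partition $(A,B,[n]\setminus(A\cup B))$, invoke the polymorphism condition to get a hyperedge $(f(A),f(B),f(C))$ of $\HH$, and read off $f(A)\conn f(B)$ by a short case split. Your remark that empty parts are permitted is a fair observation, and indeed consistent with how the paper uses the condition elsewhere (e.g.\ in the proof of Lemma~\ref{lem_all_is_good}).
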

\begin{proof}
Consider the partition $A, B, [n] \setminus (A \cup B)$.
Since $f$ is a polymorphism, the tuple $$(f(A),f(B),f([n] \setminus (A \cup B)))$$ belongs to the relation of $\HG$. Hence, either $f(A)=f(B)<f([n] \setminus (A \cup B))$, or $f(A)=f([n] \setminus (A \cup B))<f(B)$, or $f(B)=f([n] \setminus (A \cup B))<f(A)$.
\end{proof}

\begin{lemma}[Union law]
Consider disjoint $A, B, C \subseteq [n]$. Let $M$ be the multiset $[f(A), f(B), f(A \cup C), f(B \cup C)]$. There exists an element $m$ of $M$ with multiplicity 2 or 4, which we call the \emph{pseudo-minimum} of $M$. If $m$ has multiplicity 2 and $\hat{m}, \tilde{m}$ are the remaining (possibly equal) elements of $M$, then at least one of the following occurs:
\begin{itemize}
    \item $m < \hat{m}$ and $m < \tilde{m}$, or
    \item $m < \hat{m} < \tilde{m}$, or
    \item $m < \tilde{m} < \hat{m}$.
\end{itemize}
\end{lemma}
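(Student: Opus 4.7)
The plan is to apply the polymorphism condition of $f$ to two well-chosen partitions of $[n]$ and then perform a case analysis. Let $D = [n]\setminus(A\cup B\cup C)$ and write $a = f(A)$, $b = f(B)$, $\alpha = f(A\cup C)$, $\beta = f(B\cup C)$, $d = f(D)$. Since $(A,\, B\cup C,\, D)$ and $(B,\, A\cup C,\, D)$ are both partitions of $[n]$, the polymorphism property yields two constraints, each saying that among $\{a,\beta,d\}$ (resp.\ $\{b,\alpha,d\}$) two values are equal and the third is strictly greater in $\H$. Each constraint has three sub-possibilities depending on which of the three values is the strict maximum, for a total of $9$ combined sub-cases.

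In each sub-case I would explicitly write out $M = [a,b,\alpha,\beta]$ and read off a candidate pseudo-minimum. For example, if $d$ is the strict maximum in both partitions, then $a=\beta<d$ and $b=\alpha<d$, so $M=[a,b,b,a]$ and the Disjointness Law (applied to the disjoint pair $A,B$) gives $a\simeq b$; the smaller of the two (or either, if $a=b$, which corresponds to multiplicity $4$) is the pseudo-minimum. In the ``asymmetric'' sub-cases, say $d$ is the max of the first but $\alpha$ is the max of the second, one obtains $a<d=b<\alpha$, whence $M=[a,b,\alpha,a]$, $a$ has multiplicity $2$, and the chain $a<b<\alpha$ fits option~2 of the conclusion. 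The remaining seven sub-cases are analysed analogously and yield pseudo-minima among $\{a,b,d\}$, with the required ordering read directly off the two partition constraints.

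The main obstacle is ruling out configurations of the wrong shape and, more delicately, establishing the inequalities $m<\hat m$, $m<\tilde m$ in sub-cases where the two partition constraints do not obviously chain. This is where the hypothesis that $\H$ has no directed cycle of length $\le 3$ is indispensable: given $x<y<z$ inside $\H$, the absence of length-$2$ cycles forces $x\neq z$ and the absence of length-$3$ cycles forces $z\not<x$, so at worst $x$ and $z$ are incomparable. In some sub-cases that residual incomparability is exactly what prevents us from upgrading option~1 to option~2 or~3, but it does not hurt option~1; in others, the Disjointness Law applied to a suitable disjoint pair (for instance $A,B$) restores comparability and yields a stronger chain. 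I would also verify that no sub-case produces a multiset of type $(1,1,1,1)$ or $(3,1)$, for which the claimed element of multiplicity $2$ or $4$ would fail to exist; this is automatic from the pairings dictated by the two constraints.

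Finally, I would handle a few mild edge cases (e.g., $C=\emptyset$ collapses $\alpha=a$, $\beta=b$, and the statement becomes immediate from a single partition) and, where useful, exploit the symmetry swapping $(A,\alpha)\leftrightarrow(B,\beta)$ to halve the casework. The proof is thus essentially a finite, if tedious, combinatorial bookkeeping: the conceptual content lies in choosing the two partitions $(A,B\cup C,D)$ and $(B,A\cup C,D)$ and in combining the Disjointness Law with the short-cycle-free structure of $\H$.
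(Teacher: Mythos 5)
Your proof takes essentially the same route as the paper's: both work with the partitions $(A,\,B\cup C,\,D)$ and $(A\cup C,\,B,\,D)$, apply the polymorphism condition to each, invoke the Disjointness Law on the pair $A,B$ in the case where both partition constraints force equalities, and finish by finite case analysis; the only difference is cosmetic (the paper indexes cases by whether $f(A)=f(B\cup C)$ and $f(A\cup C)=f(B)$, whereas you index by which element is the strict maximum in each partition --- the same distinction). One small over-claim worth flagging: the no-3-cycle hypothesis is not actually used here --- only the absence of loops and of 2-cycles is needed (to ensure the designated element has multiplicity exactly 2 or 4, e.g.\ to rule out $f(A)=f(B)$ when $f(A)<f(D)<f(B)$); the 3-cycle condition enters only in the remark immediately following the lemma.
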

\begin{proof}
Let $D = [n] \setminus (A \cup B \cup C)$. By disjointness, $f(A \cup C) \conn f(B)$ and $f(A) \conn f(B \cup C)$. If neither of these pairs have equal values, then one element in each pair is equal to $f(D)$ (due to the partitions $A \cup C, B, D$ and $A, B \cup C, D$), and is less than the other value in the pair. This corresponds to the case where $m < \hat{m}$ and $m < \tilde{m}$. Now assume that the elements in at least one of the pairs coincide --- say $f(A) = f(B \cup C)$. If the elements of the other pair are not equal, then one of the two values is equal to $f(D)$, and the other is greater; since $f(A) < f(D)$ (due to partition $A, B \cup C, D$), this corresponds to the case where $m < \tilde{m} < \hat{m}$ or the case where $m < \hat{m} < \tilde{m}$. If also the other pair has equal values, then by disjointness $f(A) \conn f(B)$. If $f(A) \neq f(B)$ we get the case where $m < \hat{m}$ and $m < \tilde{m}$; and if $f(A) = f(B)$ then all four elements of $M$ coincide, so we are in the case where $m$ has multiplicity 4.
\end{proof}
\begin{remark}
The pseudo-minimum is not necessarily a true minimum, since in the case where $m < \hat{m} < \tilde{m}$ (or symmetrically $m < \tilde{m} < \hat{m}$) it is possible that $m \not \leq \tilde{m}$ (respectively $m \not \leq \hat{m}$). Nonetheless, since $\G$ has no directed cycles of length 3, we have that $m \not > \tilde{m}$ (respectively, $m \not > \hat{m}$) even in this case. Indeed, in all cases, for any $m' \in M$ we have that $m \not > m'$, since otherwise a directed cycle of length at most 3 would appear.
\end{remark}

Our final goal is to establish that $f$ has a hitting set of bounded size. The following lemma shows that a long sequence of sets whose images under $f$ are strictly increasing yields a hitting set. 

\begin{lemma}\label{lem:longpaths}
Suppose there exist sets $X_1, \ldots, X_k \subseteq [n]$ with $k>N$, where $f(X_1) < \cdots < f(X_k)$. Then $\bigcup_\ell X_\ell$ is a hitting set.
\end{lemma}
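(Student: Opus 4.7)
The plan is to find an index $i \in [k]$ such that $Y := X_i$ witnesses $X := \bigcup_\ell X_\ell$ being a hitting set with hitting value $v_i := f(X_i)$. Concretely, I need to rule out the existence of any $Z \subseteq [n]$ with $f(Z) = v_i$ and $Z \cap X = \emptyset$. Such a $Z$ would be disjoint from every $X_j \subseteq X$, and the disjointness law would then force $v_i = f(Z) \conn f(X_j) = v_j$ for each $j \in [k]$. Hence the entire argument reduces to exhibiting a pair $i, j \in [k]$ with $v_i \not\conn v_j$.

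Producing such a pair is the heart of the argument, and where the hypothesis $k > N$ combines with the absence of short cycles in $\H$. I would argue by contradiction: assume $v_i \conn v_j$ for all $i, j \in [k]$, and let $U = \{v_1, \ldots, v_k\}$ be the set of distinct values appearing in the chain. Any two distinct elements of $U$ are then comparable under $<$ in one direction, so $U$ induces a sub-tournament of $\H$. Since $\H$ has no directed $3$-cycle, neither does this sub-tournament, and the tournament fact recalled in Section~\ref{sec:prelims} forces it to be transitive; equivalently, $<$ linearly orders $U$. The chain $v_1 < v_2 < \cdots < v_k$ is then strictly increasing in a linear order, so the values $v_\ell$ are pairwise distinct, yielding $k = |U| \leq |V(\H)| = N$ and contradicting $k > N$.

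With such a pair $(i, j)$ in hand, verifying that $Y = X_i$ works is short: any $Z$ with $f(Z) = v_i$ and $Z \cap X = \emptyset$ would in particular be disjoint from $X_j$, so the disjointness law would yield $v_i \conn v_j$, contrary to the choice of $(i, j)$; hence every $Z$ with $f(Z) = v_i$ meets $X$, and $X$ qualifies as a hitting set. The main obstacle is the tournament step in the previous paragraph: while $k > N$ is morally pigeonhole, on its own it only ensures that some value in $\H$ repeats along the chain, and upgrading this to a genuine non-adjacency $v_i \not\conn v_j$ really does require the no-$3$-cycle hypothesis on $\H$, which is what collapses the $\conn$-clique formed by the chain values into a linear order and rules out any repeats.
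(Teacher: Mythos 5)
Your proof is correct and follows essentially the same route as the paper's: the key step in both is that if all chain values were pairwise $\conn$-related they would form a tournament, which by the no-$3$-cycle hypothesis must be transitive, and then the strictly increasing chain with $k > N$ gives a pigeonhole contradiction, after which the contrapositive of the disjointness law finishes. The only cosmetic differences are that you verify the witness directly inside $\bigcup_\ell X_\ell$ rather than first showing $X_i \cup X_j$ is hitting and invoking upward closure, and you run the tournament step as ``no $3$-cycle $\Rightarrow$ transitive $\Rightarrow$ all distinct'' rather than the paper's contrapositive ``repeated value $\Rightarrow$ cycle $\Rightarrow$ not transitive $\Rightarrow$ $3$-cycle.''
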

\begin{proof}
Suppose for contradiction that for all $i, j \in [k]$ we have $f(X_i) \conn f(X_j)$.
Thus, $f(X_1), \ldots, f(X_k)$ induce a tournament in $\G$. This tournament is not acyclic, since $f(X_1) < \cdots < f(X_k)$ must contain a directed cycle by the pigeonhole principle. Therefore, $\G$ is not transitive, which means that it must contain a directed cycle of length at most 3, a contradiction.

Thus, there exist $i, j \in [k]$ such that $f(X_i) \not \conn f(X_j)$.
By (the contrapositive of) the disjointness law, any set $Y \subseteq [n]$ such that $f(Y) = f(X_i) \not \conn f(X_j)$
is not disjoint from $X_j$.
Hence, $X_i \cup X_j$ is a hitting set.
Since hitting sets are upwards closed and $\bigcup_\ell X_\ell \supseteq X_i \cup X_j$, we find that $\bigcup_\ell X_\ell$ is a hitting set, too.
\end{proof}

The next three lemmata will partially determine $f(X \cup S \cup T)$ in terms of $f(X \cup S)$ and $f(X \cup T)$, under the assumption that $f(X) = f(Y)$ and $X, Y, S, T$ are disjoint.
These proofs are based on repeated applications of the union law, and involve a case analysis.

\begin{restatable}{lemma}{smallsetinvariance}
\label{lem:smallsetinvariance}
For disjoint $X, Y, S$, if $f(X \cup S) \not > f(X) = f(Y)$ then $f(X \cup S) = f(Y \cup S)$.
\end{restatable}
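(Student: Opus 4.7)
The plan is to apply the polymorphism condition to two carefully chosen three-block partitions of $[n]$ and combine the constraints they impose. Set $D = [n] \setminus (X \cup Y \cup S)$ and abbreviate $p = f(X \cup S)$, $q = f(Y \cup S)$, $v = f(X) = f(Y)$, and $w = f(D)$. Recall that for any partition $(A, B, C)$ of $[n]$ into three blocks, the tuple $(f(A), f(B), f(C))$ is a hyperedge of $\HH$, so two of these three values coincide and the remaining one is strictly greater in the $<$-order of $\H$.

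First I would apply this to the partition $(X \cup S,\ Y,\ D)$, which yields the trichotomy (i) $p = v$ and $w > v$, or (ii) $p = w$ and $v > w$, or (iii) $v = w$ and $p > v$. The hypothesis $f(X \cup S) \not > f(X)$, i.e.\ $p \not > v$, immediately rules out (iii). Then I would apply the same principle to the partition $(X,\ Y \cup S,\ D)$, obtaining the analogous trichotomy for $(v, q, w)$: (i$'$) $q = v$ and $w > v$, or (ii$'$) $q = w$ and $v > w$, or (iii$'$) $v = w$ and $q > v$. Pairing the two surviving alternatives (i), (ii) with each of (i$'$), (ii$'$), (iii$'$) produces six combinations, and antisymmetry of $<$ (since $\H$ has no directed $2$-cycles, we cannot have both $a < b$ and $b < a$) together with strictness of $<$ eliminate four of them as inconsistent: (i) forces $w > v$, which conflicts with both (ii$'$) and (iii$'$), and symmetrically (ii) is incompatible with (i$'$) and (iii$'$). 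The two surviving combinations are (i)+(i$'$), which gives $p = q = v$, and (ii)+(ii$'$), which gives $p = q = w$; in either case the desired equality $f(X \cup S) = f(Y \cup S)$ holds.

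I do not expect any real obstacle in this argument; the heart of it is a short case analysis once the right pair of partitions has been identified. The key design choice is to use the same third block $D$ in both partitions, so that the shared value $w = f(D)$ couples the constraints on $p$ and on $q$ and allows the two trichotomies to interact productively.
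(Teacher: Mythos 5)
Your proof is correct. It is in essence the paper's argument with the abstraction layer removed: the paper's own proof first invokes the Disjointness Law to get $f(X\cup S)\le f(X)$ and then applies the Union Law to $X$, $Y$, $S$ and does a case split on the pseudo-minimum of $[f(X),f(Y),f(X\cup S),f(Y\cup S)]$ --- but the proof of the Union Law itself rests precisely on the two partitions $(X\cup S,\,Y,\,D)$ and $(X,\,Y\cup S,\,D)$ with the shared block $D = [n]\setminus(X\cup Y\cup S)$ that you identified. So what you have done is inline the Union Law into a direct $3\times 3$ case analysis against the polymorphism condition, using antisymmetry and irreflexivity of $<$ (no $2$-cycles, no loops) to eliminate the bad combinations. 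The paper's route buys reuse --- the Union Law is also invoked in Lemmas~\ref{lem:smallsetunion} and~\ref{lem:equalsetunion} --- whereas your route is shorter and more self-contained when the lemma is taken in isolation. Both are valid; there is no gap.
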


\begin{proof}
By the disjointness law, $f(X \cup S) \conn f(Y) = f(X)$. Since $f(X \cup S) \not > f(X)$, we have $f(X \cup S) \leq f(X)$.
Apply the union law to $X$, $Y$, and $S$.

Suppose first that the pseudo-minimum of the multiset 
  \[[f(X),f(Y),f(X\cup S),f(Y\cup S)]\]
is $f(X) = f(Y)$. Recall that $f(X \cup S) \leq f(X)$, and since $f(X)$ is the pseudo-minimum, $f(X \cup S) \not < f(X)$. So $f(X \cup S) = f(X)$. Since the pseudo-minimum must appear 2 or 4 times, it follows that $f(Y \cup S) = f(X) = f(Y) = f(X \cup S)$, as required.

Otherwise, if the pseudo-minimum is $f(X \cup S)$ (or symmetrically $f(Y \cup S)$), since the minimum must appear 2 or 4 times, and $f(X) = f(Y)$ is not the pseudo-minimum, we have that $f(X \cup S) = f(Y \cup S)$.
\end{proof}

\begin{restatable}{lemma}{smallsetunion}
\label{lem:smallsetunion}
If $X, Y, S, T$ are disjoint, $f(X \cup S) = f(X \cup T)$, and $f(X) = f(Y)$, then either $f(X) \leq f(X \cup S \cup T)$ or $f(X \cup S) < f(X \cup S \cup T)$.
\end{restatable}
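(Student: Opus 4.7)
My plan is a case analysis driven by successive applications of the union law, falling back in the hardest case on the raw polymorphism constraint over full partitions of $[n]$. Set $a := f(X) = f(Y)$, $b := f(X \cup S) = f(X \cup T)$, and $c := f(X \cup S \cup T)$, so the conclusion becomes $a \le c$ or $b < c$.

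I would first apply the union law to $(X, Y, S \cup T)$, producing the multiset $M = [a, a, c, d]$ with $d := f(Y \cup S \cup T)$. If $a$ is a pseudo-minimum then $a \not > c$, and I would combine the three union-law dispositions with Lemma~\ref{lem:smallsetinvariance} applied to $(X, Y, S \cup T)$ --- which would force $c = d$ whenever $a$ and $c$ are incomparable, eventually contradicting the multiplicity count --- to conclude $a \le c$. Otherwise the pseudo-minimum must be $c$ (or, by symmetry, $d$); multiplicity constraints then force $c = d$ (else $a$ would have multiplicity $3$), and the union law gives $c < a$.

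In this latter case I would split on whether $b \not > a$. If $b \not > a$, Lemma~\ref{lem:smallsetinvariance} applied to $(X, Y, T)$ gives $f(Y \cup T) = b$, and the union law applied to $(X \cup S, Y, T)$ yields $M' = [b, a, c, b]$. The configurations $a = b$ and $b = c$ would both violate the multiplicity constraints of the union law, so $a, b, c$ are distinct and $b$ is the unique pseudo-minimum; running through the three union-law dispositions of the remaining elements $a, c$ (and using $c < a$ to kill the disposition $b < a < c$) then forces $b < c$.

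The main obstacle is the sub-case $b > a$ together with $c < a$, where Lemma~\ref{lem:smallsetinvariance} does not apply. Applying the union law to $(X \cup S, Y, T)$ produces $M'' = [b, a, c, f(Y \cup T)]$. The candidates $a$ and $b$ cannot be pseudo-minimum via the remark ($a > c$ and $b > a$ respectively), so the only surviving option has $c$ as the pseudo-minimum with $f(Y \cup T) = c$. By the symmetric argument with $S$ and $T$ swapped, also $f(Y \cup S) = c$. To close the argument I would invoke the polymorphism property on full partitions of $[n]$: writing $W := [n] \setminus (X \cup Y \cup S \cup T)$, the partition $(X, Y \cup T, S \cup W)$ demands that $(a, c, f(S \cup W))$ be a hyperedge of $\HH$, and since $c < a$, the only compatible hyperedge form pins down $f(S \cup W) = c$; on the other hand, the partition $(X \cup T, Y, S \cup W)$ demands that $(b, a, f(S \cup W))$ be a hyperedge, and since $b > a$, the only compatible form pins down $f(S \cup W) = a$. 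As $a \ne c$, these two constraints are incompatible, yielding the desired contradiction.
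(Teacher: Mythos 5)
Your proof is correct, and it takes a genuinely different route from the paper's. The paper opens with the union law on $(X, Y, S)$, cases on whether the pseudo-minimum of $[f(X),f(Y),f(X\cup S),f(Y\cup S)]$ is $f(X\cup S)$ or $f(X)$, and then closes each case with a single further application of the union law to $(Y, X\cup T, S)$ --- two applications in total. You instead open with the union law on $(X, Y, S\cup T)$ to pin down the relationship between $a := f(X)$ and $c := f(X\cup S\cup T)$ directly. Your first branch (where $a$ is the pseudo-minimum, yielding $a\leq c$) can be dispatched more directly than you suggest: since $Y$ and $X\cup S\cup T$ are disjoint, the disjointness law already gives $a\conn c$, which together with $a\not> c$ (the remark following the union law) and $a\neq c$ (multiplicity 2) gives $a< c$ without any appeal to Lemma~\ref{lem:smallsetinvariance}. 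The interesting divergence is in the remaining branch $c< a$: you split on whether $b:=f(X\cup S)> a$, handle $b\not> a$ by Lemma~\ref{lem:smallsetinvariance} plus a union-law multiplicity count (which works --- $a=b$ and $b=c$ each produce a forbidden multiplicity-3 element), and in the sub-case $b> a$ fall back on the raw polymorphism constraint over partitions of $[n]$ by introducing $W := [n]\setminus(X\cup Y\cup S\cup T)$. I checked this last step: the union law applied to $(X\cup S, Y, T)$ does force $f(Y\cup T)=c$ (the elements $a$ and $b$ cannot be pseudo-minima because $a> c$ and $b> a$, and any other value would leave every element with multiplicity one), after which the partitions $(X,\,Y\cup T,\,S\cup W)$ and $(X\cup T,\,Y,\,S\cup W)$ pin $f(S\cup W)$ down to the incompatible values $c$ and $a$, a genuine contradiction. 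So the argument is sound, but it is heavier than the paper's: where the paper's choice of first decomposition makes the residual case close in one more union-law step, your choice leaves a hard sub-case that needs the full partition constraint. The trade-off is transparency --- your version makes visible that the two disjuncts of the conclusion correspond exactly to the two possible pseudo-minima of the initial multiset --- at the price of a longer case analysis.
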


\begin{proof}
    Apply the union law to $X$, $Y$, and $S$. Since $f(X) = f(Y)$, there are two possibilities for the pseudo-minimum of 
    \[[f(X),f(Y),f(X\cup S),f(Y\cup S)].\]
    Namely, if the pseudo-minimum is $f(X\cup S)$ or $f(Y \cup S)$, then since the pseudo-minimum appears 2 or 4 times either all 4 elements are equal or $f(X \cup S) = f(Y \cup S)$ is the pseudo-minimum. Otherwise, the pseudo-minimum must be $f(X) = f(Y)$.
    
    \begin{description}[wide]
    \item[Case 1.] Assume that the pseudo-minimum is $f(X \cup S) = f(Y \cup S)$. Since $f(X) = f(Y)$, we deduce that $f(X \cup S) = f(Y \cup S) \leq f(X) = f(Y)$. Apply the union law to
    $Y$, $X\cup T$, and $S$. Since $f(X \cup T) = f(X \cup S) = f(Y \cup S) \leq f(Y)$, the pseudo-minimum of
    \[
    [f(Y), f(X \cup T), f(Y \cup S), f(X \cup T \cup S)]
    \]
    is $f(X \cup T) = f(Y \cup S)$. 
    We thus have four cases: 
    \begin{itemize}
        \item $f(X \cup T) = f(Y \cup S) = f(Y) = f(X \cup T \cup S)$,
        \item $f(Y) > f(X \cup T) = f(Y \cup S) < f(X \cup T \cup S)$,
        \item $f(X \cup T) = f(Y \cup S) < f(Y) < f(X \cup T \cup S)$,
        \item $f(X \cup T) = f(Y \cup S) < f(X \cup T \cup S) < f(Y)$.
    \end{itemize}
    Keeping in mind that $f(X) = f(Y)$ and $f(X \cup S) = f(Y \cup S)$, the conclusion follows in all 4 cases.

    \item[Case 2.] Assume that the pseudo-minimum is equal to neither $f(X \cup S)$ nor $f(Y \cup S)$. Thus the pseudo-minimum is $f(X) = f(Y)$. In this case, $f(Y) \not > f(X \cup S) = f(X \cup T)$ and $f(Y) \not > f(Y \cup S)$. By assumption, $f(Y) \neq f(X \cup T)$ and $f(Y) \neq f(Y \cup S)$. By disjointness, $f(Y) \conn f(X \cup T)$ and $f(Y) = f(X) \conn f(Y \cup S)$, so it follows that $f(Y) < f(X \cup T)$ and $f(Y) < f(Y \cup S)$.
    Now, apply the union law to $Y$, $X\cup T$, and $S$. Since $f(Y) < f(X \cup T)$ and $f(Y) < f(Y \cup S)$, the pseudo-minimum of
    \[
    [f(Y), f(X \cup T), f(Y \cup S), f(X \cup T \cup S)]
    \]
    must be $f(Y) = f(X \cup T \cup S)$, and the conclusion follows.\qedhere
    \end{description}
\end{proof}

\begin{restatable}{lemma}{equalsetunion}
\label{lem:equalsetunion}
If $X, Y, S, T$ are disjoint and $f(X) = f(Y) = f(X \cup S) = f(X \cup T)$, then $f(X) = f(X \cup S \cup T)$.
\end{restatable}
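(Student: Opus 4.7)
The plan is to deduce the conclusion by applying Lemma~\ref{lem:smallsetinvariance} exactly twice. Write $v$ for the common value $f(X)=f(Y)=f(X\cup S)=f(X\cup T)$; the target is $f(X\cup S\cup T)=v$.

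First, I would apply Lemma~\ref{lem:smallsetinvariance} to the disjoint triple $(X,Y,S)$. The equality hypothesis $f(X)=f(Y)$ is given, while the hypothesis $f(X\cup S)\not>f(X)$ follows from $f(X\cup S)=v=f(X)$ together with the fact that $\H$ has no loops (so $v\not>v$). The conclusion yields $f(Y\cup S)=f(X\cup S)=v$.

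Second, I would invoke Lemma~\ref{lem:smallsetinvariance} again, but with the substitution $X\mapsto Y$, $Y\mapsto X\cup T$, $S\mapsto S$. The disjointness hypothesis holds because $X,Y,S,T$ are pairwise disjoint. The equality $f(Y)=f(X\cup T)$ holds since both sides equal $v$; and the hypothesis $f(Y\cup S)\not>f(Y)$ holds because $f(Y\cup S)=v=f(Y)$ by the first step, again invoking the absence of loops in $\H$. The conclusion reads $f(Y\cup S)=f((X\cup T)\cup S)$, which gives $f(X\cup S\cup T)=v$, as desired.

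There is no real obstacle in the argument; the only point requiring care is identifying the correct substitution in the second application, which exploits the freshly established value $f(Y\cup S)=v$ to bridge from the original hypotheses to the target set $X\cup S\cup T$. In particular, note that Lemma~\ref{lem:smallsetunion} alone would only yield $v\leq f(X\cup S\cup T)$ and could not rule out strict inequality, so Lemma~\ref{lem:smallsetinvariance} is essential here.
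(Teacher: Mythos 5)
Your proof is correct, and it takes a genuinely different route from the paper's. The paper applies Lemma~\ref{lem:smallsetinvariance} twice in the symmetric way---to $(X,Y,S)$ and to $(X,Y,T)$---obtaining $f(Y\cup S)=v$ and $f(Y\cup T)=v$, and then finishes with one application of the union law to $Y$, $X\cup S$, and $T$: since three of the four entries of the multiset $[f(Y),f(X\cup S),f(Y\cup T),f(X\cup S\cup T)]$ equal $v$ and the pseudo-minimum must have multiplicity $2$ or $4$, the fourth entry is forced to equal $v$ as well. You instead apply Lemma~\ref{lem:smallsetinvariance} a second time with the asymmetric substitution $(Y,\,X\cup T,\,S)$, using the freshly derived $f(Y\cup S)=v$ to satisfy the ``$\not>$'' hypothesis and reading off $f(X\cup S\cup T)=f(Y\cup S)=v$ directly. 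Your version is slightly leaner (no union-law step, no multiplicity argument), while the paper's is perhaps easier to discover because both applications of Lemma~\ref{lem:smallsetinvariance} are the obvious symmetric ones. Your closing observation---that Lemma~\ref{lem:smallsetunion} would only give $v\leq f(X\cup S\cup T)$ and cannot rule out strict inequality---is accurate and explains why Lemma~\ref{lem:smallsetinvariance} is the essential ingredient in both arguments.
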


\begin{proof}
Since $\G$ has no loops, $f(X \cup S) \not > f(X)$ and $f(X \cup T) \not > f(X)$. Therefore, by Lemma~\ref{lem:smallsetinvariance}, $f(Y \cup S) = f(X \cup S) = f(X)$ and $f(Y \cup T) = f(X \cup T) = f(X)$. Let us now apply the union law to $Y$, $X\cup S$, and $T$. Since three of the elements of $[f(Y),f(X\cup S), f(Y\cup T), f(X\cup S\cup T)]$
 are equal, the fourth is also.
\end{proof}

The next lemma is the crucial one: Given some nonempty set $X$, it either creates a hitting set of bounded size immediately, or finds some set $X'$ of bounded size such that $f(X) < f(X')$. The conclusion will follow by repeatedly applying this lemma.

\begin{lemma}\label{lem:stepup}
Consider a nonempty set $X\subseteq [n]$. Then either a hitting set of size at most $|X| + 2^{N}$ exists, or we can find a nonempty set $X'\subseteq [n]$ of size at most $1 + |X| + N 2^{N}$ such that $f(X) < f(X')$.
\end{lemma}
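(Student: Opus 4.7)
The proof is a case analysis with three branches. First, if there exists $a\in [n]\setminus X$ with $f(X\cup\{a\})>f(X)$, we simply take $X'=X\cup\{a\}$, of size $|X|+1\le 1+|X|+N2^N$, and are done. Second, if additionally no set $Z\subseteq [n]\setminus X$ satisfies $f(Z)=f(X)$, then $X$ itself is a nonempty hitting set with witness $Y=X$: any $Z$ with $f(Z)=f(X)$ must meet $X$, for otherwise $Z\subseteq [n]\setminus X$ would contradict the assumption. Its size $|X|$ lies below $1+|X|+2^N$.

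The substantive case is when both easy branches fail. Fix $Z\subseteq [n]\setminus X$ with $f(Z)=f(X)$. For each $a\in [n]\setminus(X\cup Z)$, the disjointness law applied to $X\cup\{a\}$ and $Z$ forces $f(X\cup\{a\})\conn f(Z)=f(X)$; combined with the failure of the first branch, this yields $f(X\cup\{a\})\le f(X)$. Partition $[n]\setminus(X\cup Z)$ into $M_{=}=\{a:f(X\cup\{a\})=f(X)\}$ and $M_{<}=\{a:f(X\cup\{a\})<f(X)\}$. An inductive use of Lemma~\ref{lem:equalsetunion} (with $Y=Z$ playing the role of the parallel witness, and adjoining one element of $M_{=}$ at a time) shows $f(X\cup M_{=})=f(X)$. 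The 3-partition $(X\cup M_{=},Z,M_{<})$ of $[n]$ fed through the polymorphism property now forces $f(M_{<})>f(X)$, since two of these three values equal $f(X)$ and $\H$ has no loops. Thus $M_{<}$ is a step-up; if $|M_{<}|\le 1+|X|+N2^N$ we take $X'=M_{<}$ and are done.

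The remaining task, and the main obstacle, is when $M_{<}$ is too large: we must either shrink it to a smaller step-up or exhibit a hitting set of size at most $1+|X|+2^N$. I would refine the partition of $M_{<}$ by the actual value $v<f(X)$ of $f(X\cup\{a\})$ (giving at most $N-1$ classes) and iterate the above analysis within each class, using Lemma~\ref{lem:smallsetinvariance} to transport equalities between the parallel witnesses and Lemma~\ref{lem:smallsetunion} to aggregate elements of the same ``value-type''. The numerology of the bound is suggestive: $N\cdot 2^N$ reads as $N$ rounds, each contributing at most $2^N$ new elements, where $2^N$ parametrizes the possible tuples of $f$-values an element can induce on a finite set of test extensions within $\H$'s $N$ vertices; an early termination produces the smaller hitting-set bound $1+|X|+2^N$. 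The delicate book-keeping is ensuring that in every round either definite progress is made (the new $M_{<}$ strictly shrinks by a controlled amount) or the construction commits to one of the two final outcomes, and this is where the disjunctive conclusion of Lemma~\ref{lem:smallsetunion} must be exploited carefully.
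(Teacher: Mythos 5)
Your easy branches (a singleton step-up exists; no set disjoint from $X$ takes the value $f(X)$) are handled correctly, and the partition of $[n]\setminus(X\cup Z)$ into $M_=$ and $M_<$ together with the repeated application of Lemma~\ref{lem:equalsetunion} to get $f(X\cup M_=)=f(X)$ and hence $f(M_<)>f(X)$ is sound. However, the actual content of the lemma --- controlling the \emph{size} of the set $M_<$ --- is not proved; you explicitly flag it as ``the remaining task, and the main obstacle'' and only offer a heuristic sketch. That sketch is also misdirected: you propose to ``aggregate elements of the same value-type'', but Lemma~\ref{lem:smallsetunion} tells you precisely that merging two sets of equal value strictly below $f(X)$ produces a set whose value is \emph{strictly larger} (or has left the regime $<f(X)$ entirely), so the aggregate does not stay in its value class. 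This is not a book-keeping nuisance; it is the engine of the paper's argument, which runs a merging procedure on singleton parts of $[n]\setminus(X\cup Y\cup\{a\})$, merging any two parts of equal value $<f(X)$ while the parts stay of size at most $2^{N-1}$. Termination is then analysed in two ways: if some part $P$ exceeds size $2^{N-1}$, the binary merge tree rooted at $P$ yields (via Lemma~\ref{lem:smallsetunion}) a chain of length $N+1$ with strictly increasing $f$-values, and Lemma~\ref{lem:longpaths} converts that into a hitting set $X\cup P$ of size at most $|X|+2^N$; if instead no further merges are possible, at most $N$ small parts carry values $<f(X)$, their union $Z$ has size at most $1+N2^N$, and a partition argument gives $f(X)<f(Z)$. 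None of this structure --- the merge tree, the strictly increasing chain, the invocation of Lemma~\ref{lem:longpaths} --- appears in your proposal, so the bounds $1+|X|+2^N$ and $1+|X|+N2^N$ remain unjustified.

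There is also a smaller but genuine issue: you take $X'=M_<$, but $M_<$ may be empty. The paper avoids this by reserving two elements $a\neq b$ outside $X$, choosing the witness $Y$ disjoint from $X\cup\{a,b\}$, and keeping $a$ out of the merging ground set $S$; the final set $Z$ then always contains $a$. Your construction has no such safeguard, so the nonemptiness of the output is not ensured.
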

\begin{proof}
We can assume that $|X|\leq n-2$ as, otherwise,
it would follow from Lemma~\ref{lem_all_is_good} that $[n]$ is the required hitting set. Thus, let $a\neq b \in [n] \setminus X$, and suppose that $X \cup \{a, b\}$ is not hitting (if it were, it would be the required hitting set). Thus, some set $Y \subseteq [n] \setminus (X \cup \{a, b\})$ exists such that $f(X) = f(Y)$. Let $S = [n] \setminus (X \cup Y \cup \{a\})$; since $b \in S$, $S \neq \emptyset$.

Create a partition $P_1, \ldots, P_k$ of $S$, that initially consists of one singleton for each element of $S$. While
\begin{enumerate}
    \item \label{cond1} all the parts $P$ such that $f(X \cup P) < f(X)$ have size $|P| \leq 2^{N-1}$, and
    \item \label{cond2} there exist distinct parts $P_i, P_j$ such that $f(X \cup P_i) = f(X \cup P_j) < f(X)$,
\end{enumerate}
merge any two such parts $P_i$ and $P_j$. Observe that at all times all parts must have size at most $2^{N}$, since they are either singletons or the union of two parts with size at most $2^{N - 1}$. It follows that this procedure must eventually terminate; we now consider what happens when it does, depending on the reason for termination.

First, suppose that the procedure terminates because \eqref{cond1} ceases to hold; i.e., we arrive at a part $P$ with size greater than $2^{N - 1}$ with $f(X \cup P) < f(X)$. $P$ was created by repeated merges of parts in the partition; thus, consider a binary tree rooted at $P$ where each vertex is labelled by a subset of $P$ that was at some point a part in the partition, and the children of a vertex are the parts that were merged to form that part. Consider any non-root vertex in the tree, and suppose it is labelled by part $P'$. Since $P'$ was merged with some other part, it must be the case that $f(X \cup P') < f(X)$. Since this is true by assumption for $P$ as well, it is true for all the parts that appear in the tree. Now, consider any non-leaf vertex, labelled by part $Q \cup R$, where its children are labelled by $Q$ and $R$. Since $Q$ and $R$ are merged, $f(X \cup Q) = f(X \cup R)$ and $Q \cap R = \emptyset$. Thus, apply Lemma~\ref{lem:smallsetunion} to $X, Y, Q, R$
and note that $f(X) \not \leq f(X \cup Q \cup R)$, since $f(X \cup Q \cup R) < f(X)$; it follows that $f(X \cup Q) < f(X \cup Q \cup R)$ and $f(X \cup R) < f(X \cup Q \cup R)$. In other words, for any two parts $A$ and $B$ where the vertex corresponding to $A$ is a child of the vertex corresponding to  $B$, we have $f(
X\cup A) < f(X\cup B)$.
Since the tree has more than $2^{N - 1}$ leaves, by Lemma~\ref{lem:binary-trees} we can find a path in the tree starting at the root with at least $N + 1$ vertices.
Call the labels of the vertices of such path $P_k'=P, P_{k-1}',\dots,P_1'$, starting from the root and going to the leaves.
Since $f(X\cup P_1')<f(X\cup P_2')<\dots<f(X\cup P_k')$, we obtain from Lemma~\ref{lem:longpaths} that
the set $X\cup\bigcup_{\ell\in [k]}P_\ell'$ is a hitting set.
Using that $P$ is a superset of all sets appearing in the path and that hitting sets are upwards closed, we conclude that $X\cup P$ is a hitting set.
Recall that $|P| \leq 2^{N}$. So, the required hitting set is $X\cup P$.

Second, suppose that the procedure terminates because the condition \eqref{cond2} ceases to hold (regardless of whether \eqref{cond1} holds or not). By the pigeonhole principle there exist at most $N$ parts $P$ for which $f(X \cup P) < f(X)$. Let $Z$ be the union of these parts and the set $\{a\}$ (if there are not any such parts then $Z = \{a\}$), and let $Q_1, \ldots, Q_\ell$ be the remaining parts; i.e., $f(X \cup Q_i) \not < f(X)$. Observe that $|Z| \leq 1 + N 2^{N}$. There are now two cases. First, suppose that for some $i\in [\ell]$ we have $f(X) < f(X \cup Q_i)$. In this case, since $|X \cup Q_i| \leq |X| + 2^{N}$, the set $X' = X \cup Q_i$ witnesses that the statement of the lemma holds. Otherwise, for every $i \in [\ell]$, since $f(X) \not < f(X \cup Q_i)$ and $f(X \cup Q_i) \not < f(X)$, yet $f(X) = f(Y) \conn f(X \cup Q_i)$ by disjointness, we find that $f(X) = f(X \cup Q_i)$. By Lemma~\ref{lem:equalsetunion} applied $\ell-1$ times, $f(X \cup \bigcup_i Q_i) = f(X) = f(Y)$. (This also holds when there are no sets $Q_i$, since $X \cup \bigcup_i Q_i = X$ in this case.) Now, the partition $X \cup \bigcup_i Q_i, Y, Z$ implies that $f(X \cup \bigcup_i Q_i) = f(Y) < f(Z)$ (as $f$ is a polymorphism), and thus $f(X) = f(Y) < f(Z)$. Thus, as $|Z| \leq 1 + N2^{N}$, the set $X' = Z$ witnesses that the statement of the lemma holds.
\end{proof}

\begin{proof}[Proof of Theorem~\ref{thm:goodsets}]
  Create a sequence of nonempty sets \[X_1, \ldots, X_{N+1} \subseteq [n]\] in the following way. Let $X_1 = \{1\}$. For $i\in [N]$, apply Lemma~\ref{lem:stepup} to $X_i$. If it yields a nonempty hitting set of size at most $|X_i| + 2^{N}$, then let $X_{i+1}$ be this hitting set. Otherwise, if it yields a nonempty set $X'$ of size at most $1 + |X_i| + N2^{N}$ for which $f(X_i) < f(X')$, set $X_{i+1} = X'$. Note that $|X_{i+1}| \leq 1 + |X_{i}| + N2^{N}$ and $|X_{1}| = 1$, so, for all $i \in [N]$, $|X_{i+1}| \leq 1 + i (1 + N2^{N}) \leq 1 + N(1 + N2^N) = 1 + N + N^2 2^{N}$. If $X_{\ell}$ is a hitting set for some $\ell\in [N+1]$, the conclusion follows. Otherwise, we must have $f(X_1) < f(X_2) < \cdots < f(X_{N+1})$, in which case, by Lemma~\ref{lem:longpaths}, $\bigcup_i X_i$ is a hitting set. Since $|\bigcup_i X_i| \leq (N+1)(1 + N + N^2 2^{N})$, the conclusion follows in this case, too.
\end{proof}

\section{Breaking the promise from the left}
\label{sec_second_theorem}

Let $\G$ be a digraph such that $(\HG,\NAE)$ is a valid template (equivalently, the graph obtained from $\G$ by forgetting the directions is bipartite; in this case, we say that $\G$ is bipartite). In this section, we will prove the following result.

\maintwo*

The tractability part is a direct application of Proposition~\ref{prop_basic_balanced}.

\begin{proof}[Proof of tractability in Theorem~\ref{thm:main2}]
If $\G$ is balanced, applying both parts of Proposition~\ref{prop_basic_balanced}, we find
\begin{align*}
    \HG\to\Z\to\HD_2=\NAE.
\end{align*}
Therefore, $\PCSP(\HG,\NAE)$ reduces to $\CSP(\Z)$ and is thus tractable.
\end{proof}

We now turn to prove the hardness part of Theorem~\ref{thm:main2}.
Suppose that $\G$ is unbalanced --- i.e., $\G \not \to \L_\omega$. 
Note that, since $\G$ is bipartite, the net length of any oriented cycle is even.
The proof of hardness in Theorem~\ref{thm:main2} shall follow from the combination of the next two facts.

\begin{restatable}{proposition}{cycleHard}
\label{thm:cycleHard}
For any positive integer $k$, $\PCSP(\HD_{2k}, \NAE)$ is \NP-hard.
\end{restatable}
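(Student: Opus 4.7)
The plan is to invoke Theorem~\ref{thm:hardnessCondition2}, so it suffices to show that every $f\in\Pol^{(3)}(\HD_{2k},\NAE)$ is non-constant and of essential arity $1$. I start from the \emph{global shift law}: feeding $f$ three copies of the hyperedge $(a,a,a+1)$ as the three columns of a polymorphism test gives $f(\bar x)\neq f(\bar x+\mathbf 1)$ for every $\bar x\in V(\D_{2k})^3$, where $\mathbf 1=(1,1,1)$ and addition is componentwise in $\D_{2k}$. Thus $f$ is non-constant and $f(\bar x+\mathbf 1)=1\oplus f(\bar x)$.

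Next, define the partial shifts $\alpha_i(\bar x):=f(\bar x+e_i)\oplus f(\bar x)$ for $i\in\{1,2,3\}$. For $i=1$, consider the polymorphism test built from $e_1=(a,a+1,a)$, $e_2=(b,b,b+1)$, $e_3=(c,c,c+1)$, whose three rows are $(a,b,c),(a+1,b,c),(a,b+1,c+1)$; after rewriting $f(a,b+1,c+1)=1\oplus f(a-1,b,c)$ via the shift law, the non-constancy requirement becomes the one-way implication $\alpha_1(\bar x-e_1)=1\Rightarrow\alpha_1(\bar x)=1$. Tracing this implication round the cycle $\D_{2k}$ forces $\alpha_1$ to be independent of $x_1$; symmetric configurations give the analogous conclusion for $\alpha_2,\alpha_3$. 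Because $\alpha_i$ no longer depends on $x_i$, one obtains $f(\bar x+2e_i)=f(\bar x)\oplus\alpha_i(\bar x)\oplus\alpha_i(\bar x+e_i)=f(\bar x)$, so $f$ is invariant under shifting any single coordinate by $2$. Hence $f$ factors through the componentwise parity map $\phi$ as $f(\bar x)=g(\phi(\bar x))$ for a well-defined $g\colon\{0,1\}^3\to\{0,1\}$.

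Every hyperedge of $\HD_{2k}$ has the parity pattern ``two equal parities and one opposite'', and every such non-constant parity triple in $\{0,1\}^3$ is realised by some hyperedge (one of the three permutations of $(a,a,a+1)$ for an appropriate parity of $a$). Hence $g$ is a ternary polymorphism of $\NAE$. A standard short computation (or Post's lattice) shows that $\Pol^{(3)}(\NAE)$ consists exactly of the projections and their negations, i.e., $g(\bar y)=y_i\oplus\delta$ for some $i\in\{1,2,3\}$ and $\delta\in\{0,1\}$. Consequently $f(\bar x)=\phi(x_i)\oplus\delta$, which is non-constant and essentially unary, and Theorem~\ref{thm:hardnessCondition2} delivers the desired \NP-hardness of $\PCSP(\HD_{2k},\NAE)$.

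\emph{Main obstacle.} The crux of the argument is the cyclicity step that promotes the local implication $\alpha_i(\bar x-e_i)=1\Rightarrow\alpha_i(\bar x)=1$ to full independence of $\alpha_i$ from $x_i$; this step genuinely exploits the fact that $\D_{2k}$ is a directed cycle. Selecting the specific three-hyperedge pattern that, through the global shift law, yields precisely this one-way implication is the main technical device; once that is in place, the invariance under $2e_i$-shifts, the factorisation through parity, and the appeal to the classification of $\Pol^{(3)}(\NAE)$ are routine.
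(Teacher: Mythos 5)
Your argument is correct and takes a genuinely different route from the paper. Both proofs begin with the global shift law $f(\bar x)\neq f(\bar x+\mathbf 1)$ and appeal to Theorem~\ref{thm:hardnessCondition2}, but they then diverge. The paper reduces $f$ to the matrix $M_{xy}=f(x,y,0)$, establishes three local propagation rules for $M$ (adjacent equal entries force constant rows/columns; non-constant rows/columns alternate; a forbidden ``$L$''-configuration), shows that the $2\times 2$ upper-left corner determines all of $M$, and then rules out the ten bad corners by an explicit $4\times 4$ case analysis. You instead extract, from a carefully chosen constraint, the one-way implication $\alpha_i(\bar x-e_i)=1\Rightarrow\alpha_i(\bar x)=1$ for the partial shifts $\alpha_i(\bar x)=f(\bar x+e_i)\oplus f(\bar x)$, propagate it around the cycle $\D_{2k}$ to show each $\alpha_i$ is independent of the $i$th coordinate, deduce invariance of $f$ under shifting any coordinate by $2$, and thereby factor $f$ through the parity map into an element $g$ of $\Pol^{(3)}(\NAE)$. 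The classification $\Pol^{(3)}(\NAE)=\{y_i\oplus\delta\}$ (a standard consequence of Post's lattice, or equivalently the \NP-hardness of $\CSP(\NAE)$, which the paper also invokes for its $k=1$ base case) then finishes the argument. Your approach is more conceptual: it exposes the parity factorization explicitly and in effect reduces the general case to the $k=1$ case, avoiding the matrix case analysis entirely; the paper's argument is more hands-on and self-contained. The key steps — the choice of columns $(a,a+1,a)$, $(b,b,b+1)$, $(c,c,c+1)$ yielding the one-way implication after the shift-law rewrite, the cyclic propagation, and the fact that all six $\{0,1\}$-parity NAE patterns are realized by hyperedges of $\HD_{2k}$ — all check out.
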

\begin{restatable}{proposition}{redCycle}\label{thm:redCycle}
For any bipartite digraph $\G$ that contains an oriented cycle with net length
  $2k \neq 0$, $\PCSP(\HD_{2k}, \NAE)$ reduces in polynomial time to $\PCSP(\HG, \NAE)$.
\end{restatable}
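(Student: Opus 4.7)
The plan is a gadget reduction: given an input $\X$ to $\PCSP(\HD_{2k},\NAE)$, I would produce an output $\X'$ to $\PCSP(\HG,\NAE)$ by replacing each hyperedge of $\X$ with a fresh copy of a ``universal'' oriented-path gadget built inside $\HG$ from the oriented cycle $C$ of net length $2k$ that sits inside $\G$.

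The first step would be to construct a function $\phi\colon \mathbb{Z}/2k\mathbb{Z}\to V(\G)$ and, for each $a\in\mathbb{Z}/2k\mathbb{Z}$, a minimal oriented path $P_a$ in $\G$ from $\phi(a)$ to $\phi(a+1)$ of net length $1$. The raw data is the level function $\ell$ along $C$, which starts at $0$, steps by $\pm1$ at each edge, and reaches $2k$ after one full traversal. A crossings argument on $\ell$ across each level cut shows that for every $a\in\{0,1,\ldots,2k-1\}$ there exists a forward edge of $C$ joining a level-$a$ vertex to a level-$(a+1)$ vertex. Choosing $\phi$ carefully among such endpoints---and, when necessary, allowing $P_a$ to be a longer oriented path in $\G$ rather than a single edge---would yield the desired family $\{P_a\}$. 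By the extension of Theorem~\ref{thm:pathjoining} to finite families of minimal paths, there exists a single minimal oriented path $Q$ of net length $1$ together with, for each $a$, an endpoint-preserving homomorphism $\eta_a\colon Q\to P_a$.

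The reduction itself is then: for each $(u,u,v)\in R^{\X}$, introduce into $\X'$ a fresh copy of $\widehat{Q}$ with its beginning identified with $u$ and its end identified with $v$, and add the corresponding symmetric hyperedges to $R^{\X'}$. For the direction $\X\to\HD_{2k}\Rightarrow \X'\to\HG$, given $f\colon \X\to\HD_{2k}$ I would set $f'(u)=\phi(f(u))$ on $V(\X)$ and extend to each gadget via the symmetric lift of $\eta_{f(u)}$, composed with the inclusion $\widehat{P}_{f(u)}\hookrightarrow\HG$; the endpoint-preservation property of $\eta_{f(u)}$ guarantees compatibility with the identifications at $u$ and $v$, and global consistency across hyperedges holds because $f'$ on $V(\X)$ depends only on $f$. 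For the direction $\X'\to\NAE\Rightarrow \X\to\NAE$, the key observation is that $Q$ has an odd total number of edges (a minimal oriented path of net length $1$ has exactly one more forward edge than backward), so any NAE-homomorphism of a copy of $\widehat{Q}$ must assign different colours to its two endpoints; hence any $g\colon \X'\to\NAE$ restricts to a homomorphism $\X\to\NAE$. The reduction runs in polynomial time because $Q$ depends only on $\G$ and $k$, and therefore has bounded size.

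The main obstacle I anticipate is Step~1, namely arranging $\phi$ and the paths $P_a$ so that each $P_a$ is genuinely a \emph{minimal} oriented path of net length $1$ in $\G$. The naive walk in $C$ between successive first occurrences of levels $a$ and $a+1$ has net length $1$ but can dip below level $a$, so minimality may fail. Overcoming this requires either a more judicious selection of $\phi(a)$ (for instance choosing vertices that lie directly before a forward edge crossing the relevant level cut of $C$) or the construction of shorter detours through $\G$ using edges beyond $C$; either way, one must verify that the resulting path stays within two consecutive integer levels, so that it qualifies as a minimal oriented path of net length $1$.
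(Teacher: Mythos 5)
Your reduction has a genuine gap: the gadget only handles hyperedges of the special form $(u,u,v)$, but the input $\X$ to $\PCSP(\HD_{2k},\NAE)$ may contain hyperedges $(u,v,w)$ with three distinct vertices, and your construction does nothing with these. This breaks soundness. Concretely, take $\X$ with vertices $a,b,c,d$ and $R^{\X}=\{(a,a,b),(a,a,c),(a,a,d),(b,c,d)\}$; then $\X\not\to\NAE$ (the first three hyperedges force $g(b)=g(c)=g(d)=\overline{g(a)}$, contradicting $(b,c,d)$), so $\X$ is a valid no-instance, yet your $\X'$ drops the constraint from $(b,c,d)$ and therefore \emph{does} map to $\NAE$. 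Nor can the gap be patched by adding one or two path gadgets between pairs of $\{u,v,w\}$: on the completeness side you don't know in advance which pair of $f(u),f(v),f(w)$ will coincide, and a net-length-$1$ path forces its two endpoints into opposite bipartition classes of $\G$, so guessing the wrong pair kills the homomorphism $\X'\to\HG$. The path-between-two-vertices gadget is simply the wrong shape for a ternary constraint whose repeated pair is not syntactically determined.

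The paper sidesteps this entirely by working with a primitive positive formula that treats $x,y,z$ symmetrically and introduces \emph{fresh} ternary slack: $\phi(x,y,z)=\exists x',y',z'\; R(x',y',z')\wedge x\xrightarrow{\P-\P}x'\wedge y\xrightarrow{\P-\P}y'\wedge z\xrightarrow{\P-\P}z'$, where $\P-\P$ is a minimal oriented path of net length $0$. Because $\P-\P$ has net length $0$, its interpretation in any $\HD_i$ (in particular in $\NAE=\HD_2$) forces $x'=x$, $y'=y$, $z'=z$, so $\phi^{\HD_i}=R^{\HD_i}$; meanwhile $\phi^{\HG}$ defines a structure $\T$ with $\HD_{2k}\to\T$, verified using the vertices $a_0,\dots,a_{2k-1}$ and the net-length-$1$ path $\to+\P-\P$ from Proposition~\ref{prop:easy}. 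The reduction then follows from Theorem~\ref{thm_pp_defn_and_reductions} together with the trivial reduction $\PCSP(\HD_{2k},\NAE)\to\PCSP(\T,\NAE)$. Your groundwork in Step~1 (levels along the cycle, a universal path via Theorem~\ref{thm:pathjoining}, and the worry about minimality) is genuinely on the right track and parallels Proposition~\ref{prop:easy}, but the subsequent gadget must be the $\phi$-style one with slack variables, not a path joining two of the three hyperedge vertices.
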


These two results will be proved in Section~\ref{sec:cycleHard} and
Section~\ref{sec:redCycle}, respectively. 

\begin{proof}[Proof of hardness in Theorem~\ref{thm:main2}]
Suppose that $\G$ contains
an oriented cycle of nonzero net length. Since $\G$ is bipartite, this net length must be even, say $2k$.
By Proposition~\ref{thm:redCycle},
$\PCSP(\HD_{2k}, \NAE)$ reduces to $\PCSP(\HG, \NAE)$. \NP-hardness of $\PCSP(\HG, \NAE)$ then follows by
Proposition~\ref{thm:cycleHard}.
\end{proof}

\subsection{Hardness of cycle vs.~NAE}\label{sec:cycleHard}
In this section, we prove the following result.

\cycleHard*

In the following proof, we make use of the fact that, for 
$f \in \Pol^{(3)}(\HD_{2k}, \NAE)$,
\begin{equation}\label{fact}
\left.
\begin{gathered}
f(x, y, z) = f(x', y', z') \\
(x, x', x''), (y, y', y''), (z, z', z'')  \in R^{\HD_{2k}}
\end{gathered}\right\} \quad \Rightarrow \quad f(x, y, z) \neq f(x'', y'', z'')
\end{equation}
This fact follows directly from the definitions of polymorphisms and of $\NAE$. Furthermore,
\begin{equation}\label{factBool}
f(x, y, z) \neq f(x', y', z') \neq f(x'', y'', z'') \quad \Rightarrow \quad f(x, y, z) = f(x'', y'', z''),
\end{equation}
since $\NAE$ is Boolean. Finally, since $\HD_{2k}$ can be described as the template whose domain is $[0, 2k)$
and whose relation contains all permutations of tuples of the form $(x, x, x+1 \bmod 2k)$, we will consider addition over $[0, 2k)$ to be done modulo $2k$. In particular, these facts imply
\begin{equation}\label{fact2}
    f(x, y, z) \neq f(x+1, y+1, z+1),
\end{equation}
for $x, y, z \in [0, 2k)$.

\newcommand{\wildcard}{\ensuremath{\star}}

\begin{proof}
The result clearly holds for $k = 1$, since in this case $\PCSP(\HD_{2k}, \NAE) = \CSP(\NAE)$, which is \NP-hard. Thus, assume $k \geq 2$. We show that every ternary polymorphism of $(\HD_{2k}, \NAE)$ is non-constant and has essential arity 1.
This is  sufficient for hardness by Theorem~\ref{thm:hardnessCondition2}.
Since, by \eqref{fact2}, $f(x, y, z) \neq f(x+1, y+1, z+1)$ for any $x, y, z \in[0, 2k)$, we deduce by induction that $f(x, y, z) = z + f(x - z, y - z, 0) \bmod 2$ for $x, y, z\in [0, 2k)$. It thus follows that it is sufficient to describe the 
matrix $M_{xy} = f(x, y, 0)$ in order to characterise $f$ entirely. We now observe the following.
\begin{enumerate}
    \item\label{equalLine} 
    If $M$ contains two adjacent equal elements, then the row or column on which they are found is constant.
    To see why, suppose without loss of generality that $f(x,y,0) = M_{xy} = M_{x, y+1} = f(x,y + 1,0)$ for some $x, y\in [0, 2k)$. Thus, by \eqref{fact}, applied to $(x, y, 0), (x, y+1, 0), (x+1, y, 1)$, we have $f(x, y, 0) \neq f(x + 1, y, 1) $. Furthermore, by \eqref{fact2}, $f(x, y-1, 0) \neq f(x+1, y, 1)$. Thus, by \eqref{factBool}, $M_{x, y-1} = f(x, y-1, 0) = f(x, y, 0) = M_{xy}$.
    Repeating this observation $2k-2$ times yields the result. 
    \item\label{distinctLine} If a row or column of $M$ is non-constant, then it alternates between 0 and 1. This is just the contrapositive of \eqref{equalLine}.
    \item\label{lRule} The following configurations cannot appear in $M$:
    \[
    \begin{bmatrix}
    0 & \wildcard &\wildcard  \\
     \wildcard & \wildcard & 1 \\
      \wildcard & 1 & \wildcard \\
    \end{bmatrix},
    \begin{bmatrix}
    1 & \wildcard &\wildcard  \\
     \wildcard & \wildcard & 0 \\
      \wildcard & 0 & \wildcard \\
    \end{bmatrix}.
    \]
     Equivalently, letting $(x, y)$ be the coordinates of the middle element in the patterns above: if $M_{x + 1, y} = M_{x, y + 1}$ then 
     $M_{x-1, y-1} = M_{x + 1, y}$. To see why this is the case, note that $f(x + 1, y, 0) = M_{x + 1, y} = M_{x, y + 1} = f(x, y + 1, 0)$, so by \eqref{fact} applied to $(x + 1, y, 0), (x, y + 1, 0), (x, y, 1)$, we have that  $M_{x+1, y} = f(x+1, y, 0) \neq f(x, y, 1)$. Furthermore, by \eqref{fact2}, $M_{x-1,y-1} = f(x-1, y-1, 0) \neq f(x, y, 1)$. So, by \eqref{factBool}, $M_{x-1, y-1} = M_{x, y + 1}$.
\end{enumerate}
    These facts together show that $M$ is completely determined by its $2 \times 2$ submatrix $M'$ located in the upper-left corner of $M$: These four entries dictate the rows and columns they are on by \eqref{equalLine} and \eqref{distinctLine}, and these rows and columns determine all of $M$ in the same way.
    By propagating out in this way, one easily shows that, if
    \[
    M' \in \left\{
    \begin{bmatrix}
    0 & 0 \\
    0 & 0 \\
    \end{bmatrix},
    \begin{bmatrix}
    1 & 1 \\
    1 & 1 \\
    \end{bmatrix},
    \begin{bmatrix}
    1 & 0 \\
    1 & 0 \\
    \end{bmatrix}, 
    \begin{bmatrix}
    0 & 1 \\
    0 & 1 \\
    \end{bmatrix},
    \begin{bmatrix}
    1 & 1 \\
    0 & 0 \\
    \end{bmatrix},
    \begin{bmatrix}
    0 & 0 \\
    1 & 1 \\
    \end{bmatrix}
    \right\},
    \]
    then $f$ is non-constant and has essential arity 1.\footnote{In particular, $f$ is one of the 6 functions $(x_1, x_2, x_3) \mapsto x_i + c \bmod 2$, for $i \in [3], c \in [2]$.}
    For the remaining ten configurations, by propagating out we see that the $4
    \times 4$ submatrix located in the upper-left corner of $M$ must follow one of
    the five patterns in Figure~\ref{fig:patterns}
or one of the patterns obtained from those in Figure~\ref{fig:patterns} by swapping $0$s and $1$s.
These patterns do not respect rule \eqref{lRule} due to the shaded elements.
\end{proof}

    \begin{figure*}[thb]
      \begin{center}
        %\[
    \tikz[baseline=(M.west)]{%
    \node[matrix of math nodes,matrix anchor=west,left delimiter={[},right delimiter={]},ampersand replacement=\&] (M) {%
     0 \& 1 \& 0 \& 1 \\
     1\& 1 \& 1 \& 1 \\
     0\& 1 \& 0 \& 1 \\
     1\& 1 \& 1 \& 1 \\
    };
    \begin{pgfonlayer}{bg}
    \node[draw,fit=(M-1-1)(M-2-2),inner sep=-1pt,rounded corners=4pt] {};
    \node[draw,fill=lightgray, inner sep=-1pt, rounded corners=4pt, fit=(M-1-1)(M-1-1), fill opacity=0.5]{};
    \node[draw,fill=lightgray, inner sep=-1pt, rounded corners=4pt, fit=(M-2-3)(M-2-3), fill opacity=0.5]{};
    \node[draw,fill=lightgray, inner sep=-1pt, rounded corners=4pt, fit=(M-3-2)(M-3-2), fill opacity=0.5]{};
    \end{pgfonlayer}
  },
    \tikz[baseline=(M.west)]{%
    \node[matrix of math nodes,matrix anchor=west,left delimiter={[},right delimiter={]},ampersand replacement=\&] (M) {%
     1\& 0 \& 1 \& 0 \\
     1\& 1 \& 1 \& 1 \\
     1\& 0 \& 1 \& 0 \\
     1\& 1 \& 1 \& 1 \\
    };
    \begin{pgfonlayer}{bg}
    \node[draw,fit=(M-1-1)(M-2-2),inner sep=-1pt,rounded corners=4pt] {};
    \node[draw,fill=lightgray, inner sep=-1pt, rounded corners=4pt, fit=(M-1-2)(M-1-2), fill opacity=0.5]{};
    \node[draw,fill=lightgray, inner sep=-1pt, rounded corners=4pt, fit=(M-2-4)(M-2-4), fill opacity=0.5]{};
    \node[draw,fill=lightgray, inner sep=-1pt, rounded corners=4pt, fit=(M-3-3)(M-3-3), fill opacity=0.5]{};
    \end{pgfonlayer}
  },
    \tikz[baseline=(M.west)]{%
    \node[matrix of math nodes,matrix anchor=west,left delimiter={[},right delimiter={]},ampersand replacement=\&] (M) {%
     1\& 1 \& 1 \& 1 \\
     0\& 1 \& 0 \& 1 \\
     1\& 1 \& 1 \& 1 \\
     0\& 1 \& 0 \& 1 \\
    };
    
    \begin{pgfonlayer}{bg}
    \node[draw,fit=(M-1-1)(M-2-2),inner sep=-1pt,rounded corners=4pt] {};
    \node[draw,fill=lightgray, inner sep=-1pt, rounded corners=4pt, fit=(M-2-1)(M-2-1), fill opacity=0.5]{};
    \node[draw,fill=lightgray, inner sep=-1pt, rounded corners=4pt, fit=(M-4-2)(M-4-2), fill opacity=0.5]{};
    \node[draw,fill=lightgray, inner sep=-1pt, rounded corners=4pt, fit=(M-3-3)(M-3-3), fill opacity=0.5]{};
    \end{pgfonlayer}
  },
    \tikz[baseline=(M.west)]{%
    \node[matrix of math nodes,matrix anchor=west,left delimiter={[},right delimiter={]},ampersand replacement=\&] (M) {%
     1\& 1 \& 1 \& 1 \\
     1\& 0 \& 1 \& 0 \\
     1\& 1 \& 1 \& 1 \\
     1\& 0 \& 1 \& 0 \\
    };
    \begin{pgfonlayer}{bg}
    \node[draw,fit=(M-1-1)(M-2-2),inner sep=-1pt,rounded corners=4pt] {};
    \node[draw,fill=lightgray, inner sep=-1pt, rounded corners=4pt, fit=(M-2-2)(M-2-2), fill opacity=0.5]{};
    \node[draw,fill=lightgray, inner sep=-1pt, rounded corners=4pt, fit=(M-4-3)(M-4-3), fill opacity=0.5]{};
    \node[draw,fill=lightgray, inner sep=-1pt, rounded corners=4pt, fit=(M-3-4)(M-3-4), fill opacity=0.5]{};
    \end{pgfonlayer}
  },
    \tikz[baseline=(M.west)]{%
    \node[matrix of math nodes,matrix anchor=west,left delimiter={[},right delimiter={]},ampersand replacement=\&] (M) {%
     0\& 1 \& 0 \& 1 \\
     1\& 0 \& 1 \& 0 \\
     0\& 1 \& 0 \& 1 \\
     1\& 0 \& 1 \& 0 \\
    };
    \begin{pgfonlayer}{bg}
    \node[draw,fit=(M-1-1)(M-2-2),inner sep=-1pt,rounded corners=4pt] {};
    \node[draw,fill=lightgray, inner sep=-1pt, rounded corners=4pt, fit=(M-1-1)(M-1-1), fill opacity=0.5]{};
    \node[draw,fill=lightgray, inner sep=-1pt, rounded corners=4pt, fit=(M-2-3)(M-2-3), fill opacity=0.5]{};
    \node[draw,fill=lightgray, inner sep=-1pt, rounded corners=4pt, fit=(M-3-2)(M-3-2), fill opacity=0.5]{};
    \end{pgfonlayer}
  }
    \end{center}
    \caption{Patterns from the proof of Proposition~\ref{thm:cycleHard}.}
    \label{fig:patterns}
\end{figure*}

\subsection{Reduction}
\label{sec:redCycle}

In this section, we prove the following result.
\redCycle*

\newcommand{\theedge}{\ensuremath{\L_2}}

    In order to design the reduction, we will need a key insight, which we prove now. The reasoning builds up on an idea in~\cite{simple_Xuding}. In the following, for two oriented paths $\P=p_1\dots p_n$ and $\Q=q_1\dots q_m$, we let $\P + \Q$ denote the path formed by taking the disjoint union of $\P$ and $\Q$ and then identifying $p_n$ with $q_1$; and we let $\P - \Q$ denote the path formed by identifying $p_n$ with $q_m$. Furthermore, recall that $\theedge$ denotes the oriented path $p_1p_2$ with one directed edge $(p_1,p_2)$.
    \begin{proposition}\label{prop:easy}
      Let $\C$ be an oriented cycle of net length $n \geq 1$. 
      There exists an oriented path $\P$ and a set of vertices $a_0,\dotsc, a_{n-1}$ of $\C$ such that 
      \begin{enumerate}
        \item for each $i$, $\P - \P$ homomorphically maps into $\C$ in a way that its endpoints map to $a_i$, and
        \item for each $i$, $\theedge+\P-\P$ homomorphically maps into $\C$ in a way that its first vertex maps to $a_i$ and its last vertex maps to $a_{i + 1 \bmod n}$.
      \end{enumerate}
    \end{proposition}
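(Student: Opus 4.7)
The plan is as follows. I first choose the vertices $a_0,\dotsc,a_{n-1}$ of $\C$ by traversing $\C$ once in the positive direction starting from an arbitrary base vertex. Since the traversal has net length $n\geq 1$, it must cross each level $i\in\{0,\dotsc,n-1\}$ upward at least once; I take $a_i$ to be the source of the first edge in the traversal that crosses level $i$ upward, and $c_i$ its target. By construction, $(a_i,c_i)\in E(\C)$ is a forward edge, $a_i$ sits at level $i$, and $c_i$ at level $i+1$, where levels are computed in the universal cover $\widetilde{\C}$ of $\C$---a bi-infinite oriented path on which levels live in $\mathbb{Z}$.

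To build $\P$, I would invoke Theorem~\ref{thm:pathjoining} in the iterated form allowing finitely many minimal paths of equal net length. Fix a large common net length $L$ that is a multiple of $n$ (large enough so that the walks constructed below will cover $\C$). For each $i\in[0,n)$, I would produce three walks inside $\C$ of net length $L$: one starting at $a_i$ (for condition 1), and a pair starting at $c_i$ and at $a_{i+1\bmod n}$ respectively but both terminating at a shared vertex $b_i\in V(\C)$ (for condition 2). Each such walk would be constructed by lifting the start vertex to $\widetilde{\C}$ and following a suitable sub-path of $\widetilde{\C}$; after small adjustments to the endpoints so that they attain the extreme levels of the chosen portion, the sub-path is minimal (since a finite sub-path of $\widetilde{\C}$ is minimal precisely when its endpoints are level-extremes), and projecting back to $\C$ yields the required minimal oriented path together with the desired homomorphism into $\C$. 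Applying Theorem~\ref{thm:pathjoining} to the whole family then gives a single minimal oriented path $\P$ of net length $L$ with a beginning-and-end-preserving homomorphism to each member, and composing with the projections into $\C$ furnishes the various homomorphisms $\P\to\C$ I need.

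Verification of the two conditions is then straightforward. For condition 1, the homomorphism $\P\to\C$ with start at $a_i$ glues to its reverse along the shared endpoint to give $\P-\P\to\C$ with both endpoints at $a_i$; for condition 2, the two homomorphisms with starts $c_i$ and $a_{i+1\bmod n}$ and common end $b_i$ glue along $b_i$, and prepending the forward edge $(a_i,c_i)$ produces $\to+\P-\P\to\C$ with the correct first and last vertices. Surjectivity onto $\C$ is guaranteed because the underlying walks in $\C$ cover every vertex and edge of $\C$ by the choice of $L$.

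The main obstacle, as I see it, lies in realising the common endpoint $b_i$ required by condition 2: the vertices $c_i$ and $a_{i+1\bmod n}$ both lie at level $i+1$ modulo $n$, but are generally distinct, so the two associated walks in $\C$ must be coordinated in $\widetilde{\C}$ so as to terminate at a single shared lift of $b_i$. Handling this carefully---exploiting the periodicity of the covering map $\widetilde{\C}\to\C$ and the freedom to enlarge $L$ in order to align the terminations at a common target---is the key technical step of the argument.
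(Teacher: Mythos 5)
Your outline matches the paper's at a high level: you pick $a_i$ as sources of upward crossings of level $i$ along a positive traversal, produce for each $i$ several minimal walks of a common net length into $\C$, and then invoke the extension of Theorem~\ref{thm:pathjoining} to finitely many minimal paths to extract a single $\P$. The two items are then verified by gluing, exactly as you describe.

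However, the step you flag as ``the main obstacle'' is a genuine gap, and the mechanism you sketch for closing it does not work. You propose to build the two walks for condition~2 (one from $c_i$, one from $a_{i+1\bmod n}$) independently and then ``align the terminations'' by enlarging $L$ and exploiting periodicity. But lift both starting vertices to $\widetilde{\C}$: they sit at the same level $i+1$ but at different positions. Pushing both lifts forward by the same net length $L$ keeps them at different positions of $\widetilde{\C}$ for every $L$, and the difference of positions is a fixed nonzero quantity that need not be a multiple of the period; so there is no reason the two walks will ever project to a common vertex $b_i$ of $\C$, however large $L$ is taken. In other words, with independent forward walks the alignment simply fails in general, and no choice of $L$ repairs it.

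The paper's proof resolves this by \emph{not} building the two walks independently. It winds out a long path $\C^*$ around $\C$ in the positive direction and chooses $a_i$ to be the \emph{last} vertex at level $i$ on $\C^*$ (after also choosing the base vertex to be the last vertex at minimum level, so that the level function on $\C^*$ is nonnegative; this is what makes the minimality claims clean, in contrast to your arbitrary base point followed by an unspecified ``small adjustment'' of endpoints). The vertex $a'_i$ immediately following $a_i$ is the analogue of your $c_i$. One then takes $\Q_i$ to be the portion of $\C^*$ from $a'_i$ to the first vertex $b_i$ at level $i+1+M$, where $M$ is the number of vertices of $\C$. Because $M$ is large enough, $a_{i+1}$ lies on $\Q_i$, and the subpath $\Q'_i$ of $\Q_i$ from $a_{i+1}$ to $b_i$ automatically terminates at the same $b_i$. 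Both $\Q_i$ and $\Q'_i$ are minimal of net length $M$ by the choice of endpoints. This nesting is the missing idea: the shared terminus comes for free because the second walk is a subwalk of the first, not a separately constructed walk you later try to synchronise.
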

    \begin{proof}
    We first choose a positive orientation of $\C$, i.e., an orientation such that the difference between the numbers of edges directed forward and backward is positive (and thus equal to the net length). 
    
    We now proceed to choose a ``starting point''.
    Label the vertices of $\C$ as $c_0,c_1,c_2,\dots,c_p=c_0$, 
    where $c_0$ is an arbitrary vertex and the other vertices are picked according to the positive orientation, in the natural way. 
    Let $\widetilde{\C}$ be the oriented path obtained from $\C$ by deleting the edge connecting $c_0$ and $c_{p-1}$. 
    Assign levels to vertices in a natural way: $c_0$ has level $0$~%
    (denoted $\lvl(c_0)=0$),
    $c_1$ has level $1$ or $-1$ depending on the direction of the first edge,
    and so on.
    Let $\mu$ be the minimum level of the vertices of $\widetilde{\C}$~%
    (note that this is a non-positive number), 
    and let $v$ be the last vertex of $\widetilde{\C}$ of level $\mu$. 
    
    Let now $\C^*$ be an oriented path starting at $v$ and winding around $\C$ a
    certain amount of times~(the number is inessential, as long as it is sufficiently large) in the positive orientation.
    Assign levels to vertices of the oriented path $\C^*$ in the same way as before.
    Clearly, $\lvl(v)= 0$; the choice of $v$ guarantees that the levels of all other vertices are strictly positive. (Indeed, the first time we wind around $\C$ this is the case, and every successive time we wind around $\C$ the levels are increased by the net length of $\C$, which is positive.)
    Further, for $i=0,\dots,n$, we let $a_i$ be the last vertex in $\C^*$ of level $i$.
    Note that $a_0,\dots,a_{n-1}$ appear in the first copy of $\C$. Moreover, $a_0=v$, while $a_n$ coincides with the duplicate of $v$ in the second copy of $\C$.
    It remains to find the oriented path $\P$. 

    First, for every $i=0,\dots,n-1$, we will find a minimal path $\P_i$ of net length $p$ so that $\P_i-\P_i$ connects $a_i$ to itself.
    This is very easy: We start with $a_i$ and go along $\C^*$ until we hit the first element of level $i+p$.
    This is our $\P_i$;
    the path is minimal since no element after $a_i$ has level $i$,
    and since we chose the first element of level $i+p$ to be the end.
    Moreover, $a_i$ is clearly connected to itself via $\P_i-\P_i$ in $\C$.

    Next, for each $i=0,\dots,n-1$, we let $a'_i$ be the element on $\C^*$ coming after $a_i$.
    Clearly, by the definition of $a_i$, $\lvl(a'_i)=i+1$, and all the vertices of $\C^*$ after $a'_i$ have level $\geq i+1$.
    As before, we go along $\C^*$ until we hit the first vertex of level $i+1+p$, say $b_i$. 
    Let $\Q_i$ be the part of $\C^*$ connecting $a'_i$ to $b_i$;
    it has net length $p$ and it is minimal.
    By the choice of $p$, along the way, we hit the element $a_{i + 1}$.
    Let $\Q_i'$ be the part of $\Q_i$ connecting $a_{i + 1}$ to $b_i$; it is clearly minimal by the choice of $a_{i + 1}$ and $b_i$, and it has net length $p$.
    Additionally, $\theedge + \Q_i - \Q_i'$ connects $a_i$ to $a_{i + 1}$ 
    by construction.

    Now let $\P$ be the oriented path homomorphically mapping to every $\P_i$, $\Q_i$, and $\Q'_i$~%
    (it exists by Theorem~\ref{thm:pathjoining}).
    Using that $a_0$ and $a_n$ are the same vertex in $\C$, it follows that $\P$ satisfies the conclusion of the lemma.
    \end{proof}

    \begin{remark}
    \label{cor:easy}
Proposition~\ref{prop:easy} also holds if one replaces $\C$ with any directed graph $\G$ that contains an oriented cycle of net length $n\geq 1$. Indeed, one only needs to apply the result to any oriented cycle contained in $\G$.
    \end{remark}

  \paragraph*{Reduction}
    We reduce $\PCSP(\HD_{2k}, \NAE)$ to $\PCSP(\HG, \NAE)$. For any oriented path $\P$ with vertex set $\{ v_1, \ldots, v_n \}$, whose first vertex is $v_1$, whose last vertex is $v_n$, and whose edge set is $E(\P)$, define the pp-formula $x \xrightarrow{\P} y$ by
    \[
    \exists u_1, \ldots, u_n\ (x = u_1) \land (y = u_n) \land \bigwedge_{(v_i, v_j) \in E(\P)} R(u_i, u_i, u_j).
    \]
    This pp-formula is such that, in the structure $\HG$, $x \xrightarrow{\P} y$ is true if and only if $\P$ homomorphically maps into $\G$ in a way that $v_1$ is mapped to $x$ and $v_n$ is mapped to $y$.
    \begin{lemma}\label{lem:phi}
      There is a pp-formula $\phi$ such that
      \begin{enumerate}
        \item $\phi$ interpreted in $\HD_i$ pp-defines $\HD_i$ for any $i\in\mathbb{N}$,
        and
        \item if $\phi$ interpreted in $\HG$ pp-defines $\T$, then $\HD_{2k}$ homomorphically maps into $\T$.
      \end{enumerate}
    \end{lemma}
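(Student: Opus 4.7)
The plan is to build $\phi$ explicitly using the path gadgets $x \xrightarrow{\P} y$ introduced just above the lemma. Applying Proposition~\ref{prop:easy} via Remark~\ref{cor:easy} to $\G$ with $n = 2k$ yields an oriented path $\P$ and vertices $a_0, \ldots, a_{2k-1}$ of $\G$ with the two stated properties. Setting $\Q = \P - \P$, which is an oriented path of net length $0$, I take
\[
  \phi(x, y, z) \;\equiv\; \exists u_1, u_2, u_3\; R(u_1, u_2, u_3) \land u_1 \xrightarrow{\Q} x \land u_2 \xrightarrow{\Q} y \land u_3 \xrightarrow{\Q} z.
\]
Unfolding the gadgets shows this is indeed a pp-formula in the signature of $R$.

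For the first claim, the key observation is that on every $\HD_i$ the pp-formula $u \xrightarrow{\Q} v$ is equivalent to $u = v$. Indeed, any homomorphism of an oriented path into the directed cycle $\D_i$ is determined by the image of its first vertex: each subsequent vertex must be mapped to the value obtained from the starting image by shifting, modulo $i$, by its level (increasing by $1$ along a forward edge and decreasing by $1$ along a backward edge). Since $\Q$ has net length $0$, the image of its last vertex is forced to equal that of its first. Plugging this into $\phi$, the three existentials on $u_1, u_2, u_3$ collapse to $u_1 = x$, $u_2 = y$, $u_3 = z$, so $\phi^{\HD_i} = R^{\HD_i}$, giving exactly the structure $\HD_i$.

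For the second claim, I exhibit the homomorphism $h \colon \HD_{2k} \to \T$ defined by $h(i) = a_i$. The hyperedges of $\HD_{2k}$ are the permutations of the tuples $(i, i, i+1 \bmod 2k)$, so it suffices to place each permutation of $(a_i, a_i, a_{i+1 \bmod 2k})$ into $\phi^{\HG}$. Part~(2) of Proposition~\ref{prop:easy} states that $\rightarrow + \P - \P$ homomorphically maps into $\G$ with first vertex sent to $a_i$ and last to $a_{i+1 \bmod 2k}$; reading off the image of the second vertex of the initial edge $\rightarrow$ produces a vertex $b_i \in V(\G)$ with $a_i \to b_i$ an edge of $\G$ (so $R(a_i, a_i, b_i)$ holds in $\HG$) and, by restriction, $b_i \xrightarrow{\Q} a_{i+1 \bmod 2k}$ in $\HG$. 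Combined with $a_i \xrightarrow{\Q} a_i$ from part~(1), the witness $(u_1, u_2, u_3) = (a_i, a_i, b_i)$ places $(a_i, a_i, a_{i+1 \bmod 2k})$ in $\phi^{\HG}$; rearranging the $u_j$ handles the other two permutations, since $R$ is symmetric.

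The main obstacle is the uniform-in-$i$ verification that $u \xrightarrow{\Q} v$ collapses to $u = v$ on every $\HD_i$; this requires checking that each existentially quantified intermediate variable in the gadget is forced by the level bookkeeping regardless of the direction of the traversed edge. Once that is set, the rest is a direct assembly of the data given by Proposition~\ref{prop:easy}, with the symmetric form of $\phi$ doing the work of covering all permutations of each hyperedge.
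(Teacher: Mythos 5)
Your proof is correct and follows essentially the same route as the paper: the same pp-formula built from the gadget $\P - \P$ (up to reversing the direction of the path gadgets, which is harmless since $\P - \P$ is palindromic), the same net-length-$0$ collapse for item~(1), and the same witness $(a_i, a_i, b_i)$ derived from Proposition~\ref{prop:easy} for item~(2).
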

    \begin{proof}
    Consider the path $\P$ for $\G$ provided by Proposition~\ref{prop:easy} (see Remark~\ref{cor:easy}),
    and define the formula $\phi(x,y,z)$ by
    \begin{equation*}
    \exists x', y', z'\ R(x', y', z') \land x \xrightarrow{\P - \P} x' \land y \xrightarrow{\P - \P} y' \land z \xrightarrow{\P - \P} z'.
    \end{equation*}
    This is not a pp-formula, but it is equivalent to one, by ``pulling out'' the existential quantifiers hidden in $\wildcard \xrightarrow{\P - \P} \wildcard$.

    To see why the first item follows, note that $\P-\P$ has net length $0$ and thus, in $\HD_i$, we have $x'=x$, $y'=y$, and $z'=z$.
    
      To see why the second item follows, consider the vertices \[a_0, \ldots,
      a_{2k-1}\] given by Proposition~\ref{prop:easy}. Suppose we label the
      vertices of $\HD_{2k}$ by $0, 1, \ldots, 2k-1$. Then the desired homomorphism is the one sending $i$ to $a_i$. Letting $\oplus$ be addition modulo $2k$, that this is a homomorphism is
      equivalent to saying that 
      \[
          \phi(a_i, a_i, a_{i \oplus 1}) 
          \land \phi(a_i, a_{i \oplus 1}, a_i) 
          \land \phi(a_{i \oplus 1}, a_i, a_i)
      \]
     is true 
      in $\HG$ for all $i=0,\dots,2k-1$. Since $\phi$ is clearly symmetric, it
      is only necessary to prove that $\phi(a_i, a_i, a_{i \oplus 1})$
      holds. Recall that $\theedge + \P - \P$ homomorphically maps into $\G$ such
      that its beginning is mapped to $a_i$ and its end is mapped to $a_{i \oplus 1}$. In other words, there exists some vertex $b$ such that an edge from $a_i$ to $b$ exists in $\G$, and $\P- \P$ maps into $\G$ such that its beginning is mapped to $b$ and its end is mapped to $a_{i \oplus 1}$. As a consequence, $b \xrightarrow{\P - \P} a_{i \oplus 1}$ is true when interpreted in $\HG$; by the symmetry of $\P - \P$, $a_{i \oplus 1} \xrightarrow{\P - \P} b$ is true as well. Thus, the witnesses to the truth of $\phi(a_i, a_i, a_{i \oplus 1})$ are $x' = a_i, y' = a_i$, and $z' = b$. Indeed, $R(a_i, a_i, b)$ is true since there is an edge from $a_i$ to $b$; $a_i \xrightarrow{\P - \P} a_i$ is true by Proposition~\ref{prop:easy}; and $a_{i \oplus 1} \xrightarrow{\P - \P} b$ is true as shown above.
    \end{proof}

\begin{proof}[Proof of Proposition~\ref{thm:redCycle}]
Applying Lemma~\ref{lem:phi}, we find that there exists a pp-formula $\phi$ such that, if $\phi$ interpreted in $\HG$ pp-defines $\T$, then $\HD_{2k} \to \T$, and $\phi$ interpreted in $\NAE=\HD_2$ pp-defines $\NAE$.
Hence, $(\HG,\NAE)$ pp-defines $(\T,\NAE)$. By Theorem~\ref{thm_pp_defn_and_reductions}, $\PCSP(\T,\NAE)$ reduces to $\PCSP(\HG,\NAE)$. On the other hand, we have that $\PCSP(\HD_{2k},\NAE)$ reduces to $\PCSP(\T,\NAE)$, since $\HD_{2k}\to\T$. Combining the two reductions, we obtain the required result.
\end{proof}

\section{Future directions}
\label{sec_conclusion}

The results obtained in the current work identified two features of the satisfiability problem \textsc{1-in-3} vs.~\textsc{NAE} that can be regarded as the reason for its tractability: The problem is solvable in polynomial time because \textsc{1-in-3} corresponds to a balanced digraph, or because $\NAE$ corresponds to a small cycle.
We 
completely classified the
complexity of the extensions of the \textsc{1-in-3} vs.~\textsc{NAE} problem obtained by breaking the promise either from the left or from the right, in the symmetric rainbow-free regime. 
It is noteworthy that the classifications for the two cases are in terms of structural properties of the underlying graphs having two different natures. Indeed, the key property guaranteeing tractability of templates of the form $(\inn{1}{3},\HG)$ is \emph{local}, in that it corresponds to the presence of small directed cycles in $\G$. In contrast, the balancedness property, regulating the complexity of templates of the form $(\HG,\NAE)$, cannot be detected by looking at small subgraphs of $\HG$, and is thus a \emph{global} property.

As mentioned in the Introduction, our results fit within the broader picture of the complexity investigation of PCSPs. 
In this sense, we see two natural directions for future analysis:
\begin{enumerate}
    \item breaking the promise from \emph{both} sides simultaneously;
    \item relaxing the symmetricity and rainbow-freeness assumptions.
\end{enumerate}

The direction (i) corresponds to studying the complexity of problems $\PCSP(\HG, \HH)$ for arbitrary pairs of digraphs $\G$ and $\H$ such that $\G\to\H$.
Our results imply a classification of problems of this sort in the bipartite regime.
 \begin{corollary}
 Let $\G \to \H$ be two bipartite digraphs with nonempty edge sets. Then $\PCSP(\HG, \HH)$ is tractable if $\G$ is balanced and $\H$ has a directed cycle of length at most 3; otherwise, $\PCSP(\HG, \HH)$ is \NP-hard.
 \end{corollary}
\begin{proof}
    The tractability result follows since when $\G$ has no oriented cycles with nonzero net length and $\H$ has a directed cycle of length at most 3, we have that $\HG \to \zaff \to \HH$. For hardness, note that $\PCSP(\HG, \HH)$ is at least as hard as $\PCSP(\inn{1}{3}, \HH)$ and $\PCSP(\HG, \NAE)$.
    Hence, the hardness follows from Theorems~\ref{thm:main1} and~\ref{thm:main2}.
\end{proof}
\noindent Observe that the tractability boundary in this case is a conjunction of a condition on $\G$ and a condition on $\H$, with these conditions being \emph{independent}. 
Is this a coincidence, or does the independence of the properties drawing the tractability boundary for PCSPs hold in a more general regime?

All
tractable cases of problems $\PCSP(\HG,\HH)$
that we are aware of are those that can sandwich $\zaff$, $\HD_1$, or $\HD_3$ (excluding trivial cases when $\G$ has no edges): $\CSP(\HD_1)$ is trivial, as every instance has a solution; $\CSP(\HD_3)$ corresponds to solving linear equations modulo 3; $\CSP(\Z)$ corresponds to solving linear Diophantine systems.\footnote{Note that sandwiching $\HD_3$ does not imply sandwiching $\zaff$: For an example of a PCSP that sandwiches $\HD_3$, but not $\zaff$, without having any structure in the template homomorphically equivalent to $\HD_3$, let $\X$ be the unique tournament on 4 vertices that has a cycle of length 4, and consider $\PCSP(\HD_9, \widehat{\X})$.} Are all remaining problems in this class \NP-hard?\footnote{Note that the statement is true if $\G$ contains an undirected edge $\D_2$. Indeed, in this case, letting $h$ be the size of $\H$, and $\NAE_h = ([h] ; [h]^3 \setminus \{ (1, 1, 1), \dots, (h, h, h)\})$, as $\D_1 \not \to \H$, we have the sandwich 
$\NAE=\HD_2\to\HG\to\HH\to\NAE_h$,
and $\PCSP(\NAE, \NAE_h)$ is \NP-hard by~\cite{DRS05}.}

\begin{problem}\label{prob:dir1}
    Show that $\PCSP(\HG, \HH)$ is tractable if and only if $(\HG, \HH)$ sandwiches $\zaff, \HD_1$, or $\HD_3$.
\end{problem}

We proved this to be true when $\G$ is balanced (and contains at least one edge), or when $\G$ is an arbitrary unbalanced bipartite digraph and $\H=\D_2$. It is known (see e.g. \cite{notation_old}) and easy to show that $\G\to \D_3$ if and only if all oriented cycles in $G$ have net length divisible by 3. Thus, to answer the above question in the positive, it would be sufficient to prove \NP-hardness for the following two cases: (a) when $\G$ is an oriented cycle of net length not divisible by 3 and $\H$ is a complete graph with at least three vertices, and (b) when $\G$ is an oriented cycle of nonzero net length divisible by 3 and $\H$ is $\D_3$-free.

To see why this is the case, note that when $\G$ has no edges $\PCSP(\HG, \HH)$ is tractable; when $\G$ is balanced and has an edge then $\PCSP(\HG, \HH)$ is tractable if and only if $\PCSP(\inn{1}{3}, \HH)$ is; when all cycles of $\G$ have net length divisible by 3 and $\HD_3 \to \HH$ then $\PCSP(\HG, \HH)$ is tractable by reduction to $\CSP(\HD_3)$. The remaining cases are when $\G$ is unbalanced yet all cycles have net length divisible by 3, and $\H$ is $\D_3$-free, which is hard by reducing from case (b) above, and when $\G$ has an oriented cycle of net length indivisible by 3, which is hard by a reduction from case (a) above. Thus we state the following problem.

\begin{problem}\label{prob:prob2}
    Show that $\PCSP(\HG, \HH)$ is \NP-hard when:
    \begin{enumerate}[label=(\alph*)]
        \item $\G$ is an oriented cycle of net length not divisible by 3 and $\H$ is a complete graph with at least 3 vertices, or
        \item $\G$ is an oriented cycle of net length divisible by $3$ and $\H$ contains no copy of $\D_3$.
    \end{enumerate}
\end{problem}

The previous discussion amounts to the fact that a positive solution to Problem~\ref{prob:prob2} implies a positive solution to Problem~\ref{prob:dir1}.

\medskip

We now discuss direction (ii). It was observed in~\cite{Barto21:stacs,bz22:ic} that assuming the structure $\A$ in a problem $\PCSP(\inn{1}{3},\A)$ to be symmetric does not yield a loss of generality, as, if $\A$ is non-symmetric, replacing it with the maximal symmetric substructure does not alter the complexity of the problem.
In~\cite{Barto21:stacs}, essentially the following question was posed: Are templates of the form $(\inn{1}{3}, \A)$ tractable precisely when $\A$ contains one of the three structures $\HD_1,\HD_2,\HD_3$ as a substructure? When $\A$ is rainbow-free, the current work shows that this is indeed the case.

\medskip

Finally, we observe that extending our results in both directions (i) and (ii) \emph{simultaneously} would amount to classifying the complexity of all problems $\PCSP(\A,\B)$ with $\A$ and $\B$ having a single, ternary relation. Through a similar argument as in~\cite{BG21:sicomp} (see also~\cite{FV98}), this is easily seen to be equivalent to a classification for \emph{all} PCSPs. Such a wide classification appears to be out of reach for the current techniques.

{\small
\bibliographystyle{plainurl}
\bibliography{bib}
}

\end{document}